\newtheorem{theorem}{Theorem}
\newtheorem{coroll}{Corollary}
\newtheorem{lemma}{Lemma}
\theoremstyle{definition}
\newtheorem{definition}{Definition}
\theoremstyle{remark}
\definecolor{subtler}{rgb}{1,0,0.1}  % subtle shading
\newcommand{\be}{\begin{equation}} \newcommand{\ee}{\end{equation}}
\newcommand{\ba}{\begin{eqnarray}} \newcommand{\ea}{\end{eqnarray}}
\def\E{{\hbox{I\kern-.2em\hbox{E}}}} 
\newcommand{\Z}{\mathbf{Z}} \newcommand{\R}{\mathbf{R}}
\newcommand{\N}{\mathbf{N}}
\def\bl{\Bigl(\,}
\def\br{\,\Bigr)}
\def\sumj{\sum_{j=-\infty}^\infty}
\def\sumjj{\sum_{j=1}^\infty}
\def\sumk{\sum_{k=-\infty}^\infty}
\def\inth{\int_{-1/2}^{1/2}}
\def\BV{\mathrm{V}}         % symbol of bounded variation class
\def\gstar{\gamma_{H}^*}
\def\cf{c_{\!f}}
\def\cffgn{\cf^*}
\def\cffar{\cf}
\def\ngoinf{\stackrel{n\rightarrow\infty}{\rightarrow}}
\def\ngoinfsim{\stackrel{n\rightarrow\infty}{\sim}}
\def\xgozerosim{\stackrel{x\rightarrow0}{\sim}}
\newcommand{\textdef}[1]{\textit{#1}}
\newcommand{\eref}[1]{(\ref{#1})}
\newcommand{\vx}{\ensuremath{\mathcal{V}}}
\newcommand{\vxm}{\ensuremath{\m{\mathcal{\vx}}}}
\newcommand{\g}{\gamma}
\newcommand{\w}{\ensuremath{\omega}}
\def\p{\phi}     % notation for the normalised form of w()
\newcommand{\m}[1]{\ensuremath{#1^{(m)}}}
\newcommand{\I}{\ensuremath{\mathbf{I}}}
\def\whv{\omega^*_{H,\vx}}
\def\oneup{0.77}
\begin{document}

\date{\today}

\title{Why FARIMA Models are Brittle}
\author{A. Gorst-Rasmussen$^{\mathrm{a,b}}$, D. Veitch$^{\mathrm{a}}$, A. Gefferth$^{\mathrm{a,c}}$}
\maketitle
\vspace{-5mm}
$^\mathrm{a}$\textit{Department of Electrical and Electronic Engineering, The University of Melbourne, Victoria 3010, Australia.}

$^\mathrm{b}$\textit{Current affiliation: Department of Mathematical Sciences, Aalborg University, 9220 Aalborg East, Denmark.} 

$^\mathrm{c}$\textit{Current affiliation: Innovaci\'o i Recerca Industrial i Sostenible, Avda.~Carl Friedrich Gauss n${}^o$11, 08860 Castelldefels, Spain.}

\textbf{Email adddresses:}  \ \verb|agorstras@gmail.com, dveitch@unimelb.edu.au, andras.gefferth@gmail.com|

\begin{abstract}
  The FARIMA models, which have long-range-dependence (LRD), are
  widely used in many areas. Through deriving a precise
  characterisation of the spectrum, autocovariance function, and
  variance time function, we show that this family is very atypical
  among LRD processes, being extremely close to the fractional
  Gaussian noise in a precise sense. Furthermore, we show that this
  closeness property is not robust to additive noise.  We argue that
  the use of FARIMA, and more generally fractionally differenced time
  series, should be reassessed in some contexts, in particular when
  convergence rate under rescaling is important and noise is expected.
\end{abstract}

\textbf{Keywords:}
   FARIMA, fractionally differenced process, self-similarity, fGn, long-range dependence, Hurst parameter
   %parameter estimation \sep

%%%%%%%%%%%%%%%%%%%%%%%%%%%%%%%%%%%%%%%%%%%%%%%%
\section{Introduction}

For a wide variety of purposes including data modelling, synthetic
data generation, and the testing of statistical estimators, tractable
and flexible time series models are indispensible.  The well known
\textit{AutoRegressive Moving Average} (ARMA) family, for example,
allows for a wide variety of short range correlation structures, and
has been used in many contexts.

Long-Range Dependence (LRD), or long memory, in stationary time series
is a phenomenon of great importance \cite{DOT_Taqqu1}.  The
\textit{Fractional AutoRegressive Integrated Moving Average} (FARIMA)
models \cite{Hosking81,GrangerJoyeux1980} are very widely used as a
class which inherits the advantages of ARMA, while exhibiting LRD with
tunable \textit{Hurst parameter}, the scaling parameter of LRD.  They
have in particular been widely used to parsimoniously model data sets
exhibiting LRD (for example \cite{forecastingFARIMA_2000}),
% but have also been used more broadly \cite{Barbe_2010}.  More
and more importantly for our purposes
here, they have also been employed to make quantitative assessments of
the behaviour of stochastic systems in the face of LRD (for example \cite{Barbe_2010}).

A good example is in relation to estimators of the Hurst parameter $H$.  
FARIMA models have been used (for example
\cite{Comparestimate95,TaqquTeverovsky97,doukhan}) 
in order to evaluate the performance of $H$ estimators under
circumstances more challenging than that of the canonical
\textit{fractional Gaussian Noise} (fGn), in particular to assess 
small sample size performance using Monte Carlo simulation.  
%In such works, caveats based on the specific structure of the FARIMA models are not typically made.  
%In other words, FARIMA based results are taken to be representative of LRD time series in general. 
Although explicit claims of the generality of the FARIMA family are not made, 
 implicitly it is taken to be a typical class of LRD time series in some sense, and so results obtained using it are taken to be representative for LRD inputs in general.

 In fact, no parametric model can be truly typical. However, for a model class to be useful it should be representative for the purposes to which it is commonly put.
In this paper, we show that FARIMA time series, and more generally
time series whose LRD scaling derives directly from fractional
differencing such as the FEXP models \cite{Robinson94}, are far from typical when it comes to their LRD character,  the very quality for which they were first introduced.
In a sense we make precise, out of all possible LRD time
series, their LRD behaviour is in fact `as close as possible' to that
of fGn.  A key technical consequence is ultra-rapid convergence to fGn
under the rescaling operation of aggregation.  The implications for
the role of the family is strong, namely that, in regards to LRD
behaviour, \textit{FARIMA offers no meaningful diversity beyond fGn}. 
A second key consequence is that the addition of additive noise  (of almost any kind) pushes
 a FARIMA process out of the immediate neighbourhood of fGn,
changing the convergence rate. In other words FARIMA is structurally
unstable in this sense or \textit{brittle},  and is therefore unsuited for use as a 
class of LRD time series  representing real-world signals.  
%As a result of these properties, we argue that conclusions on
%estimator performance and other systems driven or evaluated using
%FARIMA, and other models of fractional differencing type, may need to be re-assessed.   

This work arose out of our prior study of (second-order)
self-similarity of stationary time-series \cite{discrete}, which
highlighted the benefits of the variance time function (VTF)
formulation of the autocovariance structure, over the more commonly
used  autocovariance function (ACVF) formulation. Using the
VTF, questions of process convergence under rescaling to exactly
(second-order) self-similar limits can often be more simply stated and studied.  
%This is particular relevant in the context of semi-parametric $H$ estimation, which is based on such rescaling.
% to understand convergence to fixed points of renormalisation operator. 

The paper is structured as follows.  
After Section~\ref{sec:back} on background material,
Section~\ref{sec:VTF} establishes the main results.  It begins by 
characterising a link between a fractionally differenced process and fGn
in the spectral domain. Using it, we prove that related Fourier
coefficients in the time domain decay extremely quickly, and then show
that as a result the VTFs of the fractionally differenced process and
fGn are extremely close.   
 We then explain why this behaviour is so atypical, and how it results in fast convergence to fGn.
Finally we go on to provide distinct direct proofs of closely related results for the ACVF and spectral formulation
 which are of independent interest. 
In particular, they lead to additional closeness results for the spectrum.  
In Section~\ref{sec:brittle} we explain why fractional processes are not robust to the addition of additive noise, even noise of particularly non-intrusive character.  We also provide numerical illustrations of this brittleness, and of the fast convergence to fGn of FARIMA processes.  
We conclude and discuss possible implications of our findings in Section~\ref{sec:conc}.

Very early versions of this work appear in the 2002 workshop papers \cite{hsn2002s,hsn2002a}.
%, using two approaches: i)
%the behaviour of the covariance function under rescaling; (ii) the
%impact on the bias and variance of the $H$ estimates using the
%aggregated Whittle \cite{?} and wavelet estimators \cite{A3}.

%\newpage
%%%%%%%%%%%%%%%%%%%%%%%%%%%%%%%%%%%%%%%%%%%%%%%%%
%%%%%%%%%%%%%%%%%%%%%%%%%%%%%%%%%%%%%%%%%%%%%%%%%
\section{Background}
\label{sec:back}

Let $\{X(t), \,t\in\Z\}$ denote a discrete time second-order
stationary stochastic process.  The mean $\mu$ and variance $\vx>0$ of
such a process are independent of $t$, and the \textdef{autocovariance
  function} (ACVF), $\g(k):=E[(X(t)-\mu)(X(t+k)-\mu)]$, depends only on
the lag $k$, $k\in\Z$, and $\g(k)=\g(-k)$.
%We will assume that the process is non-trivial, that is $\vx>0$.
%By dividing out the size parameter $\vx$ we obtain the \textdef{autocorrelation
%function} (ACF): $\rho(k):=\frac{\g(k)}{\g(0)}=\frac{\g(k)}{\vx}$,
%which defines the second-order `shape' of the process.

A description of the autovariance structure which is entirely
equivalent to $\g$ is the variance time function, defined as $\w(n) =
(\I\gamma)(n) := \sum_{k=0}^{n-1} \sum_{i=-k}^k \g(i)$
%=  n \g(0)+2\sum_{i=1}^{n-1} i \g(n-i)$, 
$n=1$, $2$, $3,\ldots$, where $\I$ denotes the double integration
operator acting on sequences.  Its normalised form, the
\textdef{correlation time function} (CTF), is just $\p(n)=\w(n)/\w(1) = \w(n)/\vx$.  
In terms of the original process, $\w(n)$ is just the variance of the sum
$\sum_{t=1}^n X(t)$.  It is convenient to symmetrically extend $\w$
and $\p$ to $\Z$ by setting $\w(n):=\w(-n)$ for $n<0$ and $\w(0)=0$.

%\gamma(i)$, taking $(\mathrm{I}\gamma)(0):=0$. Conversely, $\gamma(n)=(\Delta^2 \omega)(n)$ where $\Delta^2$ is the double differencing operator.

%%%%%%%%%%%%%%%%%%%%%%%%%%%%%%%%%%%%%%%%%%%%%%%%%
\subsection{LRD, Second-Order Self-Similarity, and Comparing to fGn}
\label{ssec:lrdss}

There are a number of definitions of long-range dependence, all of which encapsulate the idea of slow 
decay of correlations over time.  
Common definitions include power-law tail decay of the ACVF 
$\gamma(n) \!\ngoinfsim \!c_\gamma n^{2H-2}$,  
or power-law divergence of the spectral density at the origin 
$f(x) \xgozerosim \cf |x|^{-(2H-1)}$  for related constants $c_\g$ and $\cf$ (see for example \cite{DOT_Taqqu1}, Section~4).

The well known \textdef{fractional Gaussian noise} (fGn) family, parameterised by the 
\textit{Hurst parameter} $H\in[0,1]$ and variance $\vx>0$, has $\w(m)=\whv(m) := \vx m^{2H}$ (to lighten notation we sometimes write $\w^*_H$ or simply $\w^*$).
It has long memory if and only if $H\in(1/2,1]$.

In this paper we compare against fGn with $H\in(1/2,1]$ as it plays a
special role among among LRD processes; that of being a family of
\textit{second-order self-similar} time series\footnote{Until
  recently, fGn was considered to be the only such family. A second
  (and final) family was discovered recently \cite{newfps}.}.  To
understand how this comparison can be made, we must define
self-similarity and related notions.

Self-similarity relates to invariance with respect to a rescaling operation.  
In the present context, the time rescaling is provided by what is commonly called
\textdef{aggregation}.  For a fixed $m \ge 1$, the \textit{aggregation
of level $m$} of the original process $X$ is the process $\m X$ defined as
\begin{displaymath}
	\m X(t) :=\frac 1 m \sum_{j=m(t-1)+1}^{mt} X(j).
\end{displaymath}
The $\g$, $\w$, $\p$ functions and the variance of the $m$-aggregated
process will be denoted by $\m \g$, $\m \w$, $\m \p$ and $\vxm$ respectively.
It is not difficult to show \cite{discrete} that 
%under a $m$-aggregation rescaling in time, the VTF and variance become
\begin{equation}
	\m \w(n)=\frac{\w(mn)}{m^2}  ,\qquad \vx^{(m)} = \frac{\w(m)}{m^2}.
       \label{eq:wm}
\end{equation} 

To seek invariance, the time rescaling must be accompanied by a
compensating amplitude rescaling.  This is performed naturally by
dividing by $\vx^{(m)}$, which amounts to examining the effect of
aggregation on the correlation structure.  Combining the time and
amplitude rescalings yields the correlation renormalisation
\be
	\m \p(n) = \frac{\p(mn)}{\p(m)}  = \frac{\w(mn)}{\w(m)}. 
       \label{eq:phim}
\ee
We can now define second-order self-similarity as the fixed points of this operator.
\begin{definition}[]
\label{def:SS}
A process is second-order self-similar iff $\m \p = \p$, for all $m=1,2,3,\ldots$.
\end{definition}
It is easy to see that fGn, which has $\p(m)=\p_H^*(m) := m^{2H}$, satisfies this definition for all $H\in[0,1]$.

%To a single normalised fGn fixed point $\p_H^*$ there corresponds a family of 
%`unnormalised fixed points',  all with the same $H$, indexed by $\vx$, characterised by $\whv$.  
%For the remainder of the paper we focus on the unnormalised viewpoint, comparing VTF functions of processes to $\whv$.  
%This is because for actual processes, the `size' is an important parameter to match, as explained further in 
%Section~\ref{ssec:spectra}.

Given a fixed point $\p_H^*(n)$, we define its \textdef{domain of
  attraction} (DoA) to be those time series which converge to it
pointwise under the action of \eref{eq:phim}.  This definition is very
general, in particular it includes processes whose VTF's have
divergent slowly varying prefactors, as these cancel following
normalization (see Section~\ref{ssec:atyp}).
%(see \cite{discrete}, Appendix A.3, property DRV-(iv)).
It provides a natural way to define LRD which subsumes and generalises
most other definitions including those above \cite{discrete}:
\textit{a time series is long-range dependent if and only if it is in
  the domain of attraction of $\p_H^*(n)$ for some $H \in (0.5,1]$}.

With the above definitions the DoA are revealed as the natural way to
partition the space of all LRD processes, namely into sets of
processes each corresponding to the same unique normalized fGn fixed
point.  Since all processes within a DoA converge to the same fixed
point, their asymptotic structure can be meaningfully compared both
against each other and to the fixed point itself.  Alternatively if
two processes were in different DoA's then they cannot be close
asymptotically as they would converge to different processes.  
  Section~\ref{ssec:closeVTF} provides a precise characterisation of
  the closeness of a fractionally differenced process to its
  corresponding fixed point, and its associated fast convergence under
  renormalization.  
%{\b this language is safe and VFT/CFT agnostic - convenient..}

Within a given DoA, one can further partition
processes according to some measure of distance from the common fixed point. 
Section~\ref{ssec:atyp} establishes such a notion, enabling a comparison
of this closeness to that of other members of the DoA to be made.

%%%%%%%%%%%%%%%%%%%%%%%%%%%%%%%%%%%%%%%%%%%%%%%%%
\subsection{Fractionally Differenced Processes and FARIMA}
\label{ssec:farima}

Let $B$ denote the backshift operator. The fractional differencing operator of order $d>-1$ is given by 
  \begin{displaymath}
      (1-B)^{d} := \sum_{j=0}^\infty  \Gamma(j-d)/\Gamma(-d)\Gamma(j+1) B^j .   
  \end{displaymath}
Let $\{Y(t), t \in \Z\}$ be a second-order stationary stochastic process.   Assuming $H\in(0,1)$ the process 
  \begin{displaymath}
      X := (1-B)^{-(H-1/2)}Y    
  \end{displaymath}
is called a fractionally differenced process with differencing parameter $H-1/2$ driven by $Y$.

If $h$ is the spectral density of $Y$ then $X$ has spectral density (\cite{tstm}, Thm.~4.10.1)
\begin{equation}
    \label{eq:farimaeq}
    f_H(x) = h(x) \big|1-\mathrm{e}^{2\pi ix}\big|^{-(2H-1)} = h(x) |2\sin \pi x|^{-(2H-1)},  \quad x \in [-1/2,1/2].
\end{equation} 
%-- in particular,$\gamma_H$ is not summable. 
 
In this paper we assume that $Y$ is short-range dependent, and in particular that $h$ satisfies:
\vspace{-2mm}
\begin{list}{$\bullet$}{\setlength\itemsep{-0.1cm}}
   \item $h(x)>0$ and is continuous for all $x \in [-1/2,1/2]$ (and is therefore bounded);
   \item $h$ is three times continuously differentiable on $(-1/2,1/2)$ (and is therefore in $C^3$).
\end{list}
\vspace{-2mm}
Under such conditions, the ACVF of $X$ exists and satisfies
$\gamma_H(n) \thicksim c_\gamma n^{2H-2}$ for some constant $c_\gamma$
(\cite{tstm}, Thm.~13.2.2). 
Hence, when $H\in(1/2,1)$ the process $X$ is LRD with Hurst parameter $H$. 

An important example of a fractionally differenced process is the
 FARIMA class \cite{Hosking81}  where $h$ is the spectral density
of a causal invertible ARMA model.  This family includes the ARMA family as the special case $H=1/2$.
Another class is the class of FEXP-models (e.g.~\cite{Bloomfield73, Robinson94,Beran93}) which comes from taking the
logarithm of $h$ to be a trigonometric polynomial, 
i.e.~$\log h(x)=\theta_1 \cos x +\theta_2 \cos(2x)+ \cdots + \theta_{q-1}\cos((q-1)x)$  for real coefficients. 
Both FARIMA and FEXP models are widely used in statistical applications since, in addition to
exhibiting LRD, they both enable modelling of arbitrary short-range correlation structures.

%%%%%%%%%%%%%%%%%%%%%%%%%%%%%%%%%%%%%%%%%%%%%%%%%
\subsection{ Normalizing a Fractionally Differenced Process to its fGn Limit}
\label{ssec:spectra}

%Recall from Section~\ref{ssec:lrdss} that 
To identify the fGn fixed point of a fractionally differenced time
series only the value of $H$ need be determined.  When aggregating an
unnormalised fractionally differenced time series however, to identify
the corresponding limiting fGn time series we must in additional know the
correct variance $\vx$.  The purpose of this section is to define
notation to make this simple and along the way to provide useful
expressions for the spectra of these processes.

%%   fGn
The ACVF, VTF, and spectral density corresponding to the fixed point are denoted
$\gamma_H^*$, $\omega_H^*$, and $f_H^*$, respectively. The latter is given by 
(see \cite{Taqqu})
\begin{eqnarray}
    \label{eqn:ffgn}
    f^*_{H}(x) &=& \cffgn \,\pi^{-2} (2\pi)^{2H+1}\sin^2(\pi x)  \sumj  |2\pi j  +  2\pi x|^{-(2H+1)}\\\nonumber
                     &\xgozerosim & \, \cffgn |x|^{-(2H-1)},  \quad x \in [-1/2,1/2] ,
\end{eqnarray}
where $\cffgn = \vx (2\pi)^{2-2H} C(H) >0$ is the prefactor of the power-law at the origin, and 
$C(H) =\pi^{-1} H \Gamma(2H)\sin(H\pi)$ (see \cite{Taqqu}, pp 333-4, but note that the change to normalised frequency multiplies $f^*_H$ by $2\pi$, and $c_f^*$ by $(2\pi)^{2-2H} $).

%%  0d0
We denote by $\gamma_H$, $\omega_H$ and $f_H$ the ACVF, VTF
and spectral density of a fractional process with Hurst parameter $H\in(1/2,1)$. 
In view of \eqref{eq:farimaeq}, the latter is given by
\begin{eqnarray}
    \label{eqn:ffar}
    f_{H}(x) = h(x) |2\sin \pi x|^{-(2H-1)}  \hspace{-3mm}&=& \cffar(2\pi)^{2H-1}  \frac{h(x)}{h(0)} |2\sin \pi x|^{-(2H-1)}\\\nonumber
                                                               &\xgozerosim & \cffar |x|^{-(2H-1)},  \quad x \in [-1\!/2,1\!/2] ,
\end{eqnarray}
where $\cffar = (2\pi)^{1-2H} h(0)>0$.
In the case of a pure fractionally differenced process, such as FARIMA($0,d,0$), $h(x)=h(0)=2\pi$, and $\cf=(2\pi)^{2-2H}$
(note again the changes related to normalised frequency, in particular the factor of $2\pi$ is built into $h(0)$).

To conclude, the particular fGn to which the fractionally differenced process will converge under renormalisation is the one such that $\cffgn=\cffar$.
>From this, the value of $\vx$ can be obtained using the expressions for $\cffgn$ and $\cffar$ above, if needed.

%%%%%%%%%%%%%%%%%%%%%%%%%%%%%%%%%%%%%%%%%%%%%%%%%
\subsection{Regularity and Other Notations}
\label{ssec:reg}

%%  alpha Holder continuous
\smallskip
Denote for $\alpha \geq 0$ by $\Lambda_\alpha$ the normed space of uniformly $\alpha$-H\"older continuous
functions on $[-1/2,1/2]$,
\begin{displaymath}
    \Lambda_\alpha :=\big\{\varphi\colon [-1/2,1/2] \to \R\,:\,
    \|\varphi\|_{\Lambda_\alpha}<\infty\big\},
\end{displaymath}
where $\|\cdot\|_{\Lambda_\alpha}$ is the $\alpha$-H\"older norm
\begin{displaymath}
    \|\varphi\|_{\Lambda_\alpha} := \sup_{x,y \in [-1/2,1/2]}|\varphi(x)-\varphi(y)||x-y|^{-\alpha}.
\end{displaymath}
Hence $\Lambda_{\alpha} \supseteq \Lambda_\beta$ whenever $\alpha\leq\beta$.  
The space $\Lambda_\alpha$ is closed under pointwise
multiplication, addition, and composition with functions in
$\Lambda_1$.  In particular, the subset of $\Lambda_\alpha$ whose
members are bounded away from zero is closed under reciprocation
(i.e.~if $g\in\Lambda_\alpha$, and $g$ is bounded away from zero, then
so is $1/g$).  Observe that $\varphi \in \Lambda_1$ whenever
$\varphi'$ exists and is bounded.  Functions in $\Lambda_\alpha$ are
absolutely continuous.

%%  BV
The linear space of functions of bounded variation on $[-1/2,1/2]$, denoted $\BV$, is defined by
\begin{displaymath}
     \BV:=\big\{\varphi\colon [-1/2,1/2] \to \R\,:\,
    \|\varphi\|_\BV<\infty\big\}, 
\end{displaymath}
where $\|\cdot\|_\BV$ is the total variation norm
\begin{displaymath}
     \|\varphi\|_\BV: = \sup\Big\{\sum_{i=1}^{|P|}  |\varphi(x_{i})-\varphi(x_{i-1})|\,:\, P=\{x_0,x_1,\ldots x_n\} \textrm{ is a partition of }[-1/2,1/2]\Big\}.
\end{displaymath}
$\BV$ is also closed under pointwise multiplication and addition (\cite{Apostol74}, Thm.~6.9), 
and reciprocation of those functions in $V$ bounded away from zero (\cite{Apostol74}, Thm.~6.10). 
Any differentiable function with bounded derivative on $(-1/2,1/2)$ is of bounded variation 
on $[-1/2,1/2]$ (\cite{Apostol74}, Thm.~6.6).

We shall use the notation $\star$ for convolution of sequences. For sequences $a$ and $b$
\begin{displaymath}
    (a \star b)_n = \sum_{j=-\infty}^\infty a_j b_{n-j}, \quad n \in \Z. 
\end{displaymath}
The convolution is said to exist if the infinite sum converges for all $n$.
When needed for clarity, we also use $(a \star b)(n)$ to denote $(a \star b)_n$.

Throughout, by \textit{smooth function} we mean one in $C^\infty$.

%%%%%%%%%%%%%%%%%%%%%%%%%%%%%%%%%%%%%%%%%%%%%%%%%
%%%%%%%%%%%%%%%%%%%%%%%%%%%%%%%%%%%%%%%%%%%%%%%%%
\section{Fractionally Differenced Processes are Not Typical LRD Processes}
\label{sec:VTF}

The goal of this section is to establish our main results, rigorous characterisations of the closeness
of the asymptotic covariance structure of a fractionally differenced process to  that of fGn.

Our approach is simple and can be described as follows.  We begin in
the spectral domain where the relationship between the processes can
be simply stated through a function $g$ by defining 
\begin{equation}
\label{eq:firsttimeg}
       f_H(x) =  f^*_{H}(x) g(x).  
\end{equation}
The simple closed form of the spectra \eref{eqn:ffgn} and
\eref{eqn:ffar} allow $g$ to be explicitly written.  We study the
properties of $g$, obtaining a characterisation of the closeness of
the processes in the spectral domain
(Theorem~\ref{lem:spec-dens-ratio}).  This leads to a convolution
formulation $\gamma_H= \gamma_{H}^*\star G$ in the time domain, where
$G$ is the Fourier Series of $g$, and thereby to a similar
relationship for the VTFs, where the fast decay of the Fourier
coefficients can be used to characterise the closeness
(Theorem~\ref{thm:conv-of-farima-vtf-2}). The VTF result then allows
the closeness within the DoA and the convergence speed to be easily
established (Theorem~\ref{thm:convspeed}).  Finally we also provide
direct closeness results for the ACVF
(Theorem~\ref{prop:conv-of-farima}).

%%%%%%%%%%%%%%%%%%%%%%%%%%%%%%%%%%%%%%%%%%%%%%%%%
\subsection{Closeness of the Spectrum}
\label{ssec:closespec} 

%We will need the following lemma, proved in the Appendix.
%\begin{lemma}
%\label{lem:power-inequality}
%  Assume that $x,y \geq 0$ and $0<a<1$. Then $|x^a-y^a| \leq |x-y|^a$.
%\end{lemma}

We are ultimately interested in characterising the closeness of the
covariance structure of a fractionally differenced process to that of
its fGn fixed point at large lags. The rate of decay of the sequence
of Fourier coefficients of a function is well known to be closely
connected to its smoothness properties. It is, therefore, unsurprising
that a notion of closeness in the spectral domain can take the
form of statements about smoothness of the function $g$ in
\eref{eq:firsttimeg}.

  The following spectral closeness result is the crucial basis for
  both the VTF and ACVF results to come.

%%%%%%%%%%%%%
\newcounter{qcounter}
\begin{theorem}
\label{lem:spec-dens-ratio}
Assume that $H\in[1/2,1)$ and define $g(x):=f_H(x)/f_{H}^*(x)$, $x \neq 0$ and $g(0):=\lim_{x \to 0}g(x)$. 
Then $g(0)=\cffar/\cffgn = h(0)/(2\pi \vx C(H))$ and $g$ satisfies the following over $[-1/2,1/2]$\textrm{:}\vspace{-0.3cm} 
\begin{list}{\textup{(\roman{qcounter})}~}{\usecounter{qcounter}\setlength\itemsep{-0.1cm}}
\setlength\labelwidth{1cm}
\setlength\itemindent{0.3cm}
   \item $g$ is even, continuous, positive, bounded, and $L^p$, $p>0$;
   \item $g$ is twice differentiable, and smooth away from $x=0$;
   \item $g'' \in \Lambda_{2H-1} \cap \BV$, but $g''\not\in\Lambda_{\beta'}$ for $\beta'>2H-1$;
   \item $g$ admits a Fourier series with coefficients $\{G_j\}$ such that $\sumj j^2|G_j|< \infty$ and $G_n=O(n^{-3})$. 
            In particular $\sumj |G_j|<\infty$ and $\sumj j^\alpha|G_j|< \infty$ for $1<\alpha<2$.
\end{list}
\end{theorem}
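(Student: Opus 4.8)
The plan is to reduce everything to the regularity of a single elementary singular function by writing $g$ in closed form. Using \eqref{eqn:ffgn} and \eqref{eqn:ffar}, I would first isolate the $j=0$ term of the series defining $f_H^*$: since $x\in(-1/2,1/2)$ forces $j+x\neq0$ for every $j\neq0$, the tail $R(x):=\sum_{j\neq0}|j+x|^{-(2H+1)}$ is a smooth, even, positive function on $[-1/2,1/2]$, and the prefactors collapse to give
\begin{displaymath}
    f_H^*(x) = \cffgn\,\pi^{-2}\sin^2(\pi x)\bigl(|x|^{-(2H+1)} + R(x)\bigr).
\end{displaymath}
Dividing \eqref{eqn:ffar} by this and writing $\psi(x):=x/\sin(\pi x)$, which is smooth, even and positive on $[-1/2,1/2]$ with $\psi(0)=1/\pi$, cancels all the trigonometric factors and yields
\begin{displaymath}
    g(x) = \frac{2^{1-2H}\pi^2}{\cffgn}\,\frac{h(x)\,\psi(x)^{2H+1}}{1+|x|^{2H+1}R(x)}.
\end{displaymath}
The whole point of this rearrangement is that the numerator inherits the regularity of $h$, the denominator $D(x):=1+|x|^{2H+1}R(x)$ is bounded below by $1$, and so the only non-smooth ingredient anywhere in $g$ is the single term $|x|^{2H+1}$, whose exponent lies in $[2,3)$ for $H\in[1/2,1)$.

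With this form in hand, parts (i) and (ii) are almost immediate. Evenness follows since $h,\psi,R$ and $|x|^{2H+1}$ are even; positivity and continuity (hence boundedness, hence membership in every $L^p$ on a finite interval) follow from positivity of $h,\psi,R$ and $D\geq1$; evaluating at $0$ gives $g(0)=(2\pi)^{1-2H}h(0)/\cffgn=\cffar/\cffgn$; and away from $x=0$ every factor is as smooth as $h$, so $g$ inherits the regularity of $h$ there (in particular $g$ is smooth away from $0$ for the FARIMA and FEXP examples, where $h$ is real-analytic). For (iii) I would do the one genuine computation: differentiating $|x|^{2H+1}$ twice produces $(2H+1)(2H)|x|^{2H-1}$, which is continuous (as $2H-1\geq0$), lies in $\Lambda_{2H-1}$ but no smaller H\"older class, and is monotone on each side of $0$, hence of bounded variation. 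Thus $D\in C^2$ with $D''\in\Lambda_{2H-1}\cap\BV$, while $N(x):=h(x)\psi(x)^{2H+1}$ is $C^2$ with $N''$ of bounded derivative, so $N''\in\Lambda_1\cap\BV\subseteq\Lambda_{2H-1}\cap\BV$. Since $D\geq1$ is bounded away from zero, the closure of $\Lambda_{2H-1}$ and of $\BV$ under products, sums and reciprocation recorded in Section~\ref{ssec:reg} propagates these properties through $g''=N''(1/D)+2N'(1/D)'+N(1/D)''$, giving $g\in C^2$ and $g''\in\Lambda_{2H-1}\cap\BV$. Sharpness ($g''\notin\Lambda_{\beta'}$ for $\beta'>2H-1$) follows because the singular part of $g''$ at the origin is a nonzero multiple of $|x|^{2H-1}$ which the smooth factors cannot cancel.

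For (iv) I would pass to the circle, regarding all functions as $1$-periodic, which is natural since spectral densities live on $\mathbb{T}$ and $g$ is smooth at $x=\pm1/2$ (its only singularity is at the origin); thus $g\in C^2(\mathbb{T})$. Two integrations by parts then produce no boundary terms and give $G_n=(2\pi i n)^{-2}\,\widehat{g''}(n)$ for $n\neq0$. Consequently $\sum_j j^2|G_j|=(4\pi^2)^{-1}\sum_j|\widehat{g''}(j)|$, so the summability claim $\sum_j j^2|G_j|<\infty$ is exactly the assertion that $g''$ has an absolutely convergent Fourier series, while $G_n=O(n^{-3})$ follows from the classical $O(1/n)$ decay of the Fourier coefficients of the $\BV$ function $g''$. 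The two ``in particular'' consequences are then automatic, since $|j|^\alpha|G_j|\leq j^2|G_j|$ for $|j|\geq1$ and $0\leq\alpha\leq2$.

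The main obstacle is the absolute convergence of the Fourier series of $g''$, because neither ingredient suffices alone on the whole range $H\in(1/2,1)$: membership $g''\in\BV$ gives only $\widehat{g''}(n)=O(1/n)$, which is not summable, while $g''\in\Lambda_{2H-1}$ yields absolute convergence via Bernstein's theorem only when $2H-1>1/2$, i.e.\ $H>3/4$. The combination is exactly what is needed, and the tool I would invoke is the classical theorem that a function which is simultaneously of bounded variation and H\"older continuous of some positive order has an absolutely convergent Fourier series; this is precisely why part (iii) is framed as membership in $\Lambda_{2H-1}\cap\BV$. As a fully explicit alternative, one can decompose $g''$ into a rapidly-decaying smooth part plus a constant multiple of a function behaving like $|x|^{2H-1}$ near the origin, whose Fourier coefficients are $O(|n|^{-2H})$; this gives $\widehat{g''}(n)=O(|n|^{-2H})$, summable precisely because $H>1/2$, and simultaneously sharpens the decay to $G_n=O(|n|^{-(2H+2)})\subseteq O(n^{-3})$.
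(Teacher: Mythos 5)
Your proof is correct and takes essentially the same route as the paper's: you isolate a single elementary singular factor ($|x|^{2H+1}$ inside your denominator $D$, versus the paper's $|\sin \pi x|^{2H+1}$ inside $\tilde g\propto h/g$), propagate its regularity through products and reciprocation of functions bounded away from zero to get $g''\in\Lambda_{2H-1}\cap\BV$ with sharpness coming from the uncancellable $|x|^{2H-1}$ term, and then invoke exactly the two classical results the paper cites from Zygmund (absolute convergence of the Fourier series of a $\Lambda_\beta\cap\BV$ function, and $O(1/n)$ coefficient decay for $\BV$) applied to $g''$. Your remark that neither Bernstein's theorem (which needs H\"older exponent above $1/2$) nor bounded variation alone suffices, so that the intersection $\Lambda_{2H-1}\cap\BV$ is precisely what is needed, is a correct and useful gloss that the paper leaves implicit.
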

\begin{proof}
Unless otherwise specified, we consider the domain $x\in [-1/2,1/2]$.\\
First, since $f_{H}(x) \xgozerosim \cffgn |x|^{-(2H-1)}$ and $f^*_{H}(x)\xgozerosim \cffar |x|^{-(2H-1)}$,   
$g(0):= \lim_{x \to 0} g(x) = \cffar/\cffgn$.

The proof of (i) is straightforward. For completeness, details are provided in the appendix.
%First, since $f_{H}(x) \sim \cffgn |x|^{-(2H-1)}$ and $f^*_{H}(x)\sim \cffar |x|^{-(2H-1)}$,   
%$g(0):= \lim_{x \to 0} g(x) = \cffar/\cffgn$.
%
%We now prove (i).  It is well known, and can be verified by examining (\ref{eqn:ffgn}) and (\ref{eqn:ffar}), that each of $f_H^*(x)$ and $f_H(x)$ diverge to infinity at $x=0$ but are otherwise even, positive and continuous.
%Since $g(0)>0$ is finite, $g(x)$ is positive and continuous on the compact domain and hence bounded, and even. 
%Since $g$ is continuous, it is integrable (\cite{Champeney90}, p.9), and since 
%$g^p(x)$, $p>0$ is likewise continuous, positive and bounded, $g$ is in $L^p$, $p>0$.

To prove the smoothness properties (ii) and (iii), we first establish those of  $\tilde{g}$ defined as
\begin{eqnarray} 
   \label{eq:gtilde}
   \tilde{g}(x) &:=&  \frac{\cffar \pi^{2H+1}}{\cffgn h(0)} \cdot \frac{h(x)}{g(x)} \\
                     &=&  \Big|\frac{\sin(\pi x)}{\pi x}\Big|^{2H+1} +
     |\sin(\pi x)|^{2H+1}\sum_{\substack{j=-\infty \\ j \neq 0}}^{\infty}  |\pi j +  \pi x|^{-(2H+1)}  \\
                     &:=& |a(x)|^{2H+1} + |b(x)|^{2H+1} c(x)  .
    \label{eq:decom-of-f-et}
\end{eqnarray}  
%Since $a(x)>0$, $\tilde g(x)$ is bounded away from zero.
%Each of $a$, $b$, and $c$ are smooth.  The latter follows from the fact that for each $j\ne0$ the term $|\pi j +  \pi x|^{-(2H+1)}$ is infinitely differentiable in $x\in[-1/2,1/2]$. 
%By comparing against $\sum_{j=1}^\infty (j-1/2)^{-(2H+1)}<\infty$ the Weierstrass' $M$-test shows 
%that the defining sum for $c(x)$, and the sum of the term by term first derivatives, each converge uniformly.
%A classical result on the differentiability of infinite series (\cite{Apostol74}, Thm.~9.14) then shows 
%that $c'(x)$ is given by the latter sum.  Using exactly the same $M$-test, this can be 
%repeated for derivatives of all orders, proving that $c$ is smooth.
%
%Since $a$ is smooth and bounded above zero, $|a(x)|^{2H+1}$ is smooth over $[-1/2,1/2]$, 
%and the same is true for  $|b(x)|^{2H+1}$ away from $x=0$. 
It is not difficult to show (see the appendix for details) that $\tilde{g}$ is smooth everywhere 
except at the origin where its smoothness is controlled by that of $|b|^{2H+1}$, which we now study.

Let $\beta = 2H-1$. 
Since $b$ is smooth and $\beta\in(0,1)$, $|b|^{\beta+2}$ is twice differentiable at the origin.
The smoothness of its second derivative is controlled by $(b')^2|b|^{\beta}$, which,
since $b\in\Lambda_1$ and $x \mapsto |x|^\beta$ is in $\Lambda_{\beta}$, is also in $\Lambda_{\beta}$ 
by the multiplicative and compositional closure properties of $\Lambda_{\beta}$. 
It follows that $\tilde{g}''$ exists and is in $\Lambda_{\beta}$.
Since however $x \mapsto |x|^\beta$ is not in $\Lambda_{\beta'}$ for any $\beta'>\beta$, and 
moreover $b(x)\xgozerosim\pi x$ and $b'(0)\ne0$,  
$\tilde{g}''$ is not in $\Lambda_{\beta'}$ for any $\beta'>\beta$.

Since smooth functions are in $\BV$, by similar arguments using the closure properties of $\BV$,
we have ${\tilde g}''\in \BV$ if $|b|^{\beta} \in\BV$.  The latter holds since it is easy to see that 
$|b|^{\beta}$ is monotone (with total variation $2$).

%% proof for g
We have shown that $\tilde{g}''$ exists and is in $\Lambda_{2H-1}\cap\BV$, 
but not in $\Lambda_{\beta'}$ for any $\beta'>2H-1$.
We now prove the same for $g$ using (\ref{eq:gtilde}). 
It suffices to consider $1/\tilde{g}$ since $h'''$ exists.
Since $\tilde{g}$ is bounded away from zero, (ii) follows since  $(1/\tilde{g})'' =  2(\tilde{g}')^2/\tilde{g}^3-\tilde{g}''/\tilde{g}^2$ clearly exists, and is smooth away from the origin. 
Now consider (iii).
It follows from the last expression and the fact that $\tilde{g}>0$  that $(1/\tilde{g})''$ and hence $g''$ are in $\BV$ and $\Lambda_{2H-1}$ by applying the respective closure properties.
Finally,  since $1/\tilde{g}^2(0)\ne 0$,  the smoothness of $(1/\tilde{g})''$ is controlled by that of $\tilde{g}''$
and so $(1/\tilde{g})''\not\in\Lambda_{\beta'}$ for any $\beta'>2H-1$. 
This completes the proof of (iii).

We now prove (iv).  
%Since both $g$ and $g'$ are continuous, and $g'$ is bounded, $g$ is absolutely continuous (\cite{Champeney90}, Thm. 3.12).
%Since g abs cts, FS exists, converges to g uniformly (\cite{Kufner71}, Thm. 7.9).
Since each of $g$, $g'$, and $g''$ are continuous and bounded, the
Fourier series for each exists and are related by term by term
differentiation (\cite{Champeney90}, Thm.~15.19).  In particular
$g(x)=\sumj G_j e^{2\pi ijx}$, and we can write $g''(x) = -4\pi^2
\sumj j^2 G_j e^{2\pi ijx}$.  Now Zygmund \cite{Zygmund_Vol1},
Thm.~VI.3.6 states that the Fourier Series of a function in
$\Lambda_{\beta}\cap\BV$ for some $\beta>0$ converges absolutely. This
applies to $g''$ and proves that $\sumj j^2|G_j|< \infty$ as claimed.
Finally, since $g''\in\BV$, the magnitude of its Fourier coefficients
decay as $O(|j|^{-1})$ (Zygmund \cite{Zygmund_Vol1}, Thm.~II.4.12),
proving that $G_j=O(j^{-3})$.
\end{proof}

The result suggests that fractionally
  differenced processes are not typical; for a general LRD process, only boundedness
  of $g$ at the origin would be automatic. In contrast, the present $g$ is a very well behaved function. 
A plot of $g$ is provided in Figure~\ref{fig:geg} which shows its flatness at the origin
(it also suggests that $g$ is monotone increasing over $[0,1/2]$, though this plays no role in what follows).  
Here we have set $\cffar=\cffgn$, so that its value at the origin is just $1$.
% (this can be achieved by tuning the variance).}
%, indeed in the simple case of a trivial short range component where $h(x)$ is just a constant, $g$ is itself close to a constant.  
It is interesting to note that since $g$ is positive, even, and square
integrable, it is the spectral density of some second order time series.
% Since $f_H$ and $f_H^*$ are also spectral densities of second order
% time series they are also $L^2$, since this is just equivalent to
% the finite variance assumption.  Since each of $f_H$, $f_H^*$ and
% $g$ are in $L^2$ standard product/convolution theorems apply (see
% Champeney section 8.3) and could have be used to establish $\gamma_H= G\star \gstar$.

\begin{figure}[h!]
\center
   \includegraphics[scale=\oneup,angle=270]{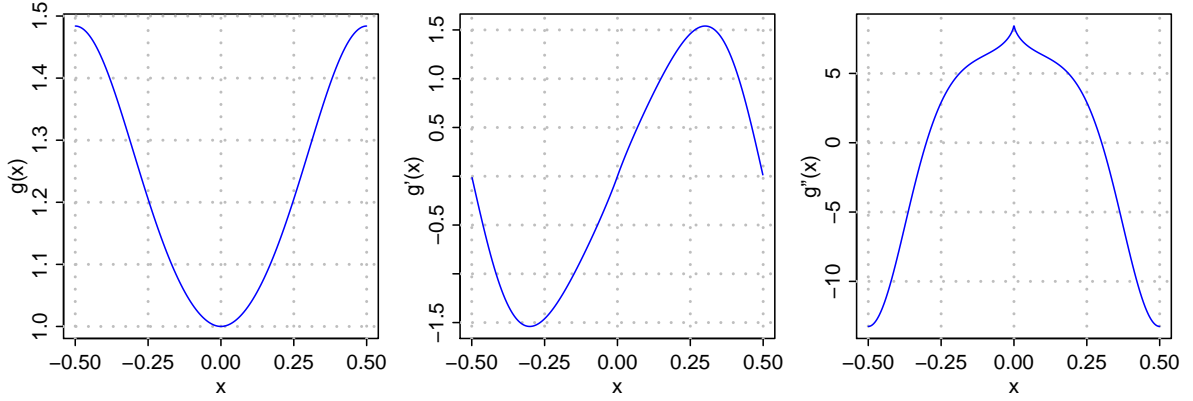}
\caption{\label{fig:geg} The function $g(x)=f_H(x)/f_H^*(x)$  and its first two derivatives in the canonical case of a pure fractionally differenced process (FARIMA0d0) with $H=0.8$ and $\cffar=\cffgn$.}
\vspace{-3mm}
\end{figure}

%%%%%%%%%%%%%%%%%%%%%%%%%%%%%%%%%%%%%%%%%%%%%%%%%
\subsection{Closeness of the VTF}
\label{ssec:closeVTF}

The first step in elucidating the relationship between $\w_H$ and
  $\w_H^*$ is to confirm that the relationship $f_H(x) = f^*_{H}(x)
g(x)$ between the spectral densities translates to the expected
convolution relationship $\gamma_H=\gstar \star G$ between the ACVFs.
It is straightforward to confirm that, thanks to the nice behaviour of
$g$ and $G$ detailed in Theorem~\ref{lem:spec-dens-ratio}, this is
indeed the case.
\begin{lemma}
   \label{lem:conv_exists}
   The auto-covariance functions $\gamma_H$ and $\gstar$  are related through the convolution $\g_H= \gstar \star G$.
\end{lemma}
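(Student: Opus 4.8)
The plan is to recognise the claim as an instance of the classical convolution theorem for Fourier coefficients: since $\gamma_H$ and $\gstar$ are the sequences of Fourier coefficients of the spectral densities $f_H$ and $f_H^*$ respectively, and $f_H=f_H^*\,g$, the coefficients of the product ought to be the convolution of the individual coefficient sequences. All of the analytic ingredients needed to make this rigorous have already been assembled. On the one hand $f_H^*$ is integrable: its only singularity, at the origin, is of order $|x|^{-(2H-1)}$ with $2H-1\in(0,1)$, and indeed $\inth f_H^* = \gstar(0)=\vx<\infty$. On the other hand, by Theorem~\ref{lem:spec-dens-ratio}(i) the function $g$ is bounded, and by (iv) its Fourier series converges absolutely, $\sumj|G_j|<\infty$; this also guarantees that the product $f_H=f_H^* g$ is integrable so that $\gamma_H$ exists.

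First I would write the ACVF as the Fourier coefficient of the spectral density, $\gamma_H(n)=\inth f_H(x)\,e^{2\pi i n x}\,dx$, substitute $f_H=f_H^*g$, and replace $g$ by its Fourier series $g(x)=\sumj G_j e^{2\pi i j x}$. After interchanging summation and integration this gives
\[
   \gamma_H(n) = \sumj G_j \inth f_H^*(x)\,e^{2\pi i (n+j)x}\,dx = \sumj G_j\,\gstar(n+j).
\]
Using that $g$ is even, so that $G$ is even (Theorem~\ref{lem:spec-dens-ratio}(i)), together with the reindexing $k=n+j$ and commutativity of convolution, the right-hand side becomes $\sum_k \gstar(k)\,G_{n-k}=(\gstar\star G)(n)$, which is the asserted identity. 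I would also note in passing that this convolution is well defined for every $n$: $\gstar$ is bounded and $\sumj|G_j|<\infty$, so the defining sum converges absolutely.

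The only point requiring genuine care — and the main, though mild, obstacle — is the justification of the interchange of sum and integral. I expect to handle it via the Weierstrass $M$-test: because $\sumj|G_j|<\infty$, the partial sums of the Fourier series of $g$ converge to $g$ uniformly on $[-1/2,1/2]$ and are uniformly bounded, so their integration against the fixed $L^1$ function $f_H^*$ commutes with the limit by dominated convergence. Equivalently one may invoke Fubini's theorem directly, the bound $\bigl(\sumj|G_j|\bigr)\bigl(\inth f_H^*\bigr)<\infty$ furnishing the required absolute integrability of the sum–integral object. Everything else is routine bookkeeping, consistent with the text's remark that the identity is straightforward to confirm.
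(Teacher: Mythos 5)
Your proposal is correct and follows essentially the same route as the paper's proof: both start from $\gamma_H(n)=\inth f_H^*(x)g(x)e^{2\pi i nx}\,\mathrm{d}x$, insert the absolutely convergent Fourier series of $g$, justify the interchange by Fubini using $\sumj|G_j|<\infty$ together with the integrability of $f_H^*$, and finish with an evenness/reindexing step (the paper does this via cosines and a product-to-sum identity, you do it by reindexing complex exponentials, which is only a cosmetic difference).
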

\vspace{-2mm}
For completeness a proof is given in the appendix.

\medskip
Since $\w_H = \I\g_H$, it is tempting to seek a relationship of the form $\w_H= G\star \w_H^*$ through taking the `double integral' of $\gamma_H= G\star\gstar$.   However, since $\w_H^*(m)=\vx m^{2H}$ diverges with $m$, this is not necessarily well defined.  The following lemma provides a sufficient condition for the existence of such a convolution, as well as some of its important properties which will be crucial in what follows.
 
\begin{lemma}
\label{lem:main-lem-2}
Assume $1 <\alpha < 2$ and let $a=\{|n|^{\alpha}, n \in \Z \}$. 
Let $b$ be a symmetric sequence satisfying $\sum_{j=1}^\infty j^\alpha |b_j|<\infty$. 
Then $S_b=\sum_{j=-\infty}^\infty b_j$ and the symmetric sequence $c=a \star b$
exist, and $(c_n - S_b a_n)\ngoinf 0$.
\vspace{-2mm}
\end{lemma}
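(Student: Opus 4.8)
The plan is to handle the three assertions in turn, spending almost all the effort on the limit $(c_n-S_b a_n)\ngoinf 0$, which is where the symmetry of $b$ is essential. First I would dispatch existence. Since $\alpha>0$ we have $|b_j|\le|j|^\alpha|b_j|$ for $|j|\ge1$, so the hypothesis together with symmetry gives $\sum_{j=-\infty}^\infty|b_j|=|b_0|+2\sum_{j=1}^\infty|b_j|<\infty$; hence $S_b$ converges absolutely. For $c_n=\sum_{k}|n-k|^\alpha b_k$, I would fix $n$ and apply the elementary inequality $(x+y)^\alpha\le 2^{\alpha-1}(x^\alpha+y^\alpha)$ (valid for $x,y\ge0$, $\alpha\ge1$) to $|n-k|^\alpha\le(|n|+|k|)^\alpha$, bounding $\sum_k|n-k|^\alpha|b_k|$ by $2^{\alpha-1}|n|^\alpha\sum_k|b_k|+2^{\alpha-1}\sum_k|k|^\alpha|b_k|$, both finite. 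Thus $c$ exists termwise, and its symmetry follows from that of $a$ and $b$ by reindexing $k\mapsto-k$.

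For the limit I would use $S_b a_n=|n|^\alpha\sum_k b_k$ and pair the $k$ and $-k$ terms using $b_k=b_{-k}$, obtaining
\[ c_n - S_b a_n = \sum_{k=-\infty}^\infty\big(|n-k|^\alpha-|n|^\alpha\big)b_k = \sum_{k=1}^\infty\big(|n-k|^\alpha+|n+k|^\alpha-2|n|^\alpha\big)b_k, \]
where the $k=0$ term vanishes. The key observation is that each bracket is the symmetric second difference of the convex function $\phi(x)=|x|^\alpha$, so that although $|n-k|^\alpha-|n|^\alpha$ individually diverges as $n\to\infty$, the paired combination is controlled. Taking $n>0$, I would establish two facts. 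Pointwise decay: for fixed $k$ and $n>k$, the representation $\phi(n+k)+\phi(n-k)-2\phi(n)=\int_0^k\!\int_{-s}^s\phi''(n+u)\,du\,ds$ with $\phi''(x)=\alpha(\alpha-1)x^{\alpha-2}$ yields the bound $\alpha(\alpha-1)(n-k)^{\alpha-2}k^2\to0$, since $\alpha-2<0$. A uniform dominating bound: for $k\le n$ the map $n\mapsto(n+k)^\alpha+(n-k)^\alpha-2n^\alpha$ is nonincreasing on $[k,\infty)$ (its derivative is nonpositive because $x^{\alpha-1}$ is concave), hence at most its value $(2^\alpha-2)k^\alpha$ at $n=k$; and for $k>n$ the crude estimates $(k-n)^\alpha\le k^\alpha$, $(k+n)^\alpha\le 2^\alpha k^\alpha$ give a bound of the form $Ck^\alpha$. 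Thus $\big|(|n-k|^\alpha+|n+k|^\alpha-2|n|^\alpha)b_k\big|\le Ck^\alpha|b_k|$ uniformly in $n$, which is summable by hypothesis.

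With these two facts in hand, the discrete dominated convergence theorem (Tannery's theorem) applies: each summand tends to $0$ while dominated by the fixed summable sequence $Ck^\alpha|b_k|$, so the sum tends to $0$, giving $(c_n-S_b a_n)\ngoinf0$ as $n\to+\infty$. The case $n\to-\infty$ follows at once since both $c$ and $a$ are symmetric.

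I expect the main obstacle to be the construction of the $n$-uniform summable dominating bound. The naive second-difference estimate $\alpha(\alpha-1)(n-k)^{\alpha-2}k^2$ degenerates as $k\to n$ and is useless once $k\ge n$, while the only control available on $b$ is through $k^\alpha$ rather than $k^2$ (and $\sum k^2|b_k|$ may diverge because $\alpha<2$). The monotonicity-in-$n$ argument that replaces this degenerate estimate by the clean $k^\alpha$ bound is therefore the crux; the remaining steps are routine.
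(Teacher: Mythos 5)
Your proposal is correct and follows essentially the same route as the paper: both reduce to $c_n-S_ba_n=\sum_{j\ge1}T_n^j b_j$ with $T_n^j=|n-j|^\alpha+(n+j)^\alpha-2n^\alpha$, establish that $T_n^j\to0$ for fixed $j$ together with an $n$-uniform bound $T_n^j=O(j^\alpha)$ (the paper gets $T_n^j<T_0^j=2j^\alpha$ from monotonicity in $n$ via its Lemma~A1, exactly the concavity-of-$x^{\alpha-1}$ argument you identify as the crux), and conclude by a tail-splitting/dominated-convergence step. The only cosmetic differences are your integral representation of the second difference in place of the paper's double mean-value-theorem estimate, and your two-case ($k\le n$ versus $k>n$) domination in place of the paper's single monotonicity statement valid for all $x\ge0$.
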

This result is proved in the appendix. The proof of the last part is based on the monotonicity of a function which generalises $\gstar$ to two parameters  (see 
%Lemma~\ref{lem:dom-fun-2} 
Lemma~A1
in the appendix).

\begin{coroll}
\label{cor:Iconv_exists}
The convolution $G\star \w_H^*$ exists for $H\in(1/2,1)$.
\vspace{-2mm}
\end{coroll}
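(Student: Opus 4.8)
The plan is to obtain this as an immediate application of Lemma~\ref{lem:main-lem-2}. First I would observe that, by definition, $\w_H^*(n) = \vx\,|n|^{2H}$ for all $n \in \Z$ (the convention $\w_H^*(0)=0$ being consistent with $|0|^{2H}=0$), so that $\w_H^* = \vx\, a$ where $a = \{\,|n|^{2H}, n \in \Z\,\}$. Setting $\alpha := 2H$, the hypothesis $H\in(1/2,1)$ gives precisely $\alpha \in (1,2)$, which is exactly the range of the exponent required by the lemma.

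Next I would check the hypotheses on the second sequence, taking $b := G$. The sequence $G$ is symmetric: since $g$ is real and even by Theorem~\ref{lem:spec-dens-ratio}(i), its Fourier coefficients satisfy $G_{-j}=G_j$. Moreover, Theorem~\ref{lem:spec-dens-ratio}(iv) guarantees $\sumj j^\alpha |G_j| < \infty$ for this $\alpha\in(1,2)$; since the $j=0$ term vanishes and the summand is even in $j$, this two-sided bound yields the one-sided bound $\sum_{j=1}^\infty j^\alpha |G_j| < \infty$, which is precisely the summability hypothesis of Lemma~\ref{lem:main-lem-2}. With both hypotheses verified, the lemma asserts that the symmetric convolution $c = a\star G$ exists. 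Since convolution with $G$ is linear, $G\star\w_H^* = \vx\,(a\star G) = \vx\,c$ then also exists, which is the claim.

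There is no substantive obstacle here: the analytic content lies entirely in Lemma~\ref{lem:main-lem-2} (and in the decay estimates of Theorem~\ref{lem:spec-dens-ratio}(iv) that feed it). The only points needing care are bookkeeping ones, namely confirming that the exponent $2H$ lands in the open interval $(1,2)$ exactly over the LRD range $H\in(1/2,1)$, and that the two-sided summability of $G$ established earlier transfers to the one-sided hypothesis of the lemma via the symmetry of $G$ together with the vanishing of the $j=0$ term.
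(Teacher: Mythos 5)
Your proof is correct and follows essentially the same route as the paper: both apply Lemma~\ref{lem:main-lem-2} with $b=G$ and $\alpha=2H\in(1,2)$, the summability hypothesis coming from Theorem~\ref{lem:spec-dens-ratio}(iv) (the paper bounds $\sum_{j\ge1} j^{\alpha}|G_j|$ by $\sum_{j\ge1} j^{2}|G_j|$, which is the same estimate underlying part (iv)). Your extra bookkeeping --- factoring out $\vx$ by linearity and noting the symmetry of $G$ --- is harmless and only makes explicit what the paper leaves implicit.
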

\begin{proof}
Set $b=G$ in Lemma~\ref{lem:main-lem-2}. 
The condition on $b$ holds since 
$\sum_{j=1}^\infty j^\alpha |G_j| < \sum_{j=1}^\infty j^2 |G_j|$ which is finite from Theorem~\ref{lem:spec-dens-ratio}.
The result then following immediately by identifying $\alpha$ with $2H$ and 
$a$ with $\omega_H^*$.
\end{proof}

%whose proof we leave to the appendix.
%\begin{lemma}
%\label{lem:dom-fun-2}
%Assume  $1< \alpha< 2$ and define
%$f_\alpha(x,y):=|x-y|^{\alpha}+(x+y)^\alpha-2x^\alpha$ for  $x\ge0$,  $y>0$. 
%For each $y$, $f_\alpha(\,\cdot\,,y)$ is positive, strictly decreasing, and
%$\lim_{x \to \infty} f_\alpha(x,y)=0$.
%\end{lemma}
 
The following lemma shows that, if existence is granted, taking the `double integral' of a convolution is straightforward, provided a double counting issue at the origin is allowed for.
\begin{lemma}
\label{lem:doub-int-of-conv}
  Let $a,b$ be symmetric sequences and assume that $c:=a \star b$ exists.  
  Then $\I c$ exists, and if $(\I a) \star b$ exists, then 
  $\I c = (\I a) \star b - \left((\I a) \star b\right)_{0}$ .
\vspace{-2mm}
\end{lemma}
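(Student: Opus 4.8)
The plan is to treat $\I$ as a discrete analogue of double integration and to exploit the fact that, just as $(\int a)\star b$ is an antiderivative of $a\star b$ in the continuous setting, $(\I a)\star b$ should be a ``double antiderivative'' of $a\star b=c$. The cleanest route avoids computing $\I c$ term-by-term and instead inverts $\I$ via an associated second-difference operator. First I would record the equivalent triangular form
\[
    (\I x)(n)=\sum_{i=-(n-1)}^{n-1}\bigl(n-|i|\bigr)x_i,\qquad n\ge1,
\]
valid for any symmetric sequence $x$, and define the symmetric second-difference operator $(\mathcal{D}\omega)(m):=\tfrac12\bigl[\omega(m+1)-2\omega(m)+\omega(m-1)\bigr]$. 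A short telescoping computation then shows $(\I x)(m+1)-2(\I x)(m)+(\I x)(m-1)=2x_m$ for all $m$, i.e.\ $\mathcal{D}\,\I x = x$; moreover, for symmetric $\omega$ the equation $\mathcal{D}\omega=x$ together with $\omega(0)=0$ determines $\omega$ uniquely, since evaluating at $m=0$ and using symmetry gives $\omega(1)=x_0$, and the recurrence $\omega(m+1)=2\omega(m)-\omega(m-1)+2x_m$ then fixes all remaining values.

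Existence of $\I c$ is immediate: for each $n$, $(\I c)(n)$ is the finite sum $\sum_{|i|\le n-1}(n-|i|)c_i$ of values $c_i$, each of which exists by hypothesis. For the identity, set $A:=\I a$ (symmetric, with $A_0=0$) and $L:=A\star b=(\I a)\star b$, which exists by assumption. Convolution of two symmetric sequences is symmetric, so $L$ is symmetric. The key step is to compute $\mathcal{D}L$: since $\mathcal{D}$ is a fixed finite linear combination of three terms, and each of $L_{m+1},L_m,L_{m-1}$ is a convergent series, I may combine them termwise to get
\[
   (\mathcal{D}L)(m)=\sumj (\mathcal{D}A)(m-j)\,b_j=\sumj a_{m-j}\,b_j=c_m,
\]
using $\mathcal{D}A=\mathcal{D}\,\I a=a$ from the previous step and the commutativity of the (existing) convolution $a\star b$.

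It remains to match boundary data. Both $\I c$ and $L-L_0$ are symmetric, vanish at the origin, and are sent to $c$ by $\mathcal{D}$ (the constant $L_0$ is annihilated by the second difference). By the uniqueness just established, $\I c = L - L_0 = (\I a)\star b - \bigl((\I a)\star b\bigr)_0$, which is the claim. Conceptually, $\mathcal{D}L=c$ already identifies $L$ as a double antiderivative of $c$, and the subtraction of $L_0$ is precisely the correction that enforces the $(\I c)(0)=0$ normalisation --- this is the ``double counting at the origin'' flagged in the statement.

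The main obstacle is bookkeeping rather than analysis: identifying the correct discrete inverse $\mathcal{D}$ of $\I$ together with the right pair of boundary conditions, and ensuring the constant correction enters exactly once. The interchange of $\mathcal{D}$ with the infinite convolution is harmless because $\mathcal{D}$ touches only three indices, so no dominated-convergence argument is needed --- only the termwise addition of finitely many convergent series. I would avoid the tempting direct route of substituting $c_i=\sum_j a_{i-j}b_j$ into $\sum_{|i|\le n-1}(n-|i|)c_i$ and interchanging sums: the resulting inner weight $\sum_{|i|\le n-1}(n-|i|)a_{i-j}$ does not collapse to a clean per-$j$ expression such as $A_{n-j}-A_j$, and the identity emerges only after summing against $b_j$ and using symmetry, which makes the operator/uniqueness argument decisively cleaner.
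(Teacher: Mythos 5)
Your proof is correct, but it is organised quite differently from the paper's. The paper computes $(\I c)_n$ head-on: it substitutes $c=a\star b$ into the triple sum, rewrites it as $\sum_j a_j H_n(j)$ with $H_n(j)=\sum_{k=0}^{n-1}\sum_{i=-k}^k b_{i-j}$, replaces $2a_j$ by the second difference $(\I a)_{j-1}-2(\I a)_j+(\I a)_{j+1}$, and performs a double summation by parts to transfer that difference onto $H_n$, which telescopes to $b_{n-j}+b_{-n-j}-2b_{-j}$; splitting the resulting sums (justified by the existence of $(\I a)\star b$) and using symmetry then yields the identity. You run the argument in the opposite direction: you verify that the second-difference operator $\mathcal{D}$ is a left inverse of $\I$ on symmetric sequences, check that $\mathcal{D}$ passes through the convolution so that $\mathcal{D}\bigl((\I a)\star b\bigr)=a\star b=c$, and then pin down $\I c$ by uniqueness of symmetric solutions of $\mathcal{D}\omega=c$ vanishing at the origin. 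The two proofs share the same core identity ($\mathcal{D}\,\I=\mathrm{id}$), but yours trades the paper's double Abel summation --- the step where existence of $(\I a)\star b$ is invoked repeatedly to justify splitting infinite sums --- for a termwise combination of three convergent series plus an elementary recurrence/uniqueness argument, which is cleaner and less error-prone. The one step you should make explicit is the commutation $\sum_j (\I a)_j b_{m-j}=\sum_j (\I a)_{m-j}b_j$ (and likewise for $a\star b$): your computation of $\mathcal{D}L$ requires the convolution written so that the difference falls on $\I a$ rather than on $b$, and the reindexing $j\mapsto m-j$ is valid for a two-sided sum with convergent one-sided tails, but it deserves a sentence rather than the bare word ``commutativity.''
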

The proof of this result is based on a careful rearrangement of terms justified by the repeated use of the existence of 
$(\I a) \star b$. It is given in the appendix.

\bigskip
We are now able to prove our main result on the VTF.
\begin{theorem}
 \label{thm:conv-of-farima-vtf-2}
  Let $\omega_H$ denote the VTF of a fractionally differenced process with $H\in(1/2,1)$ and $\cffar$ chosen equal to $\cffgn$. 
  Then
  \begin{displaymath}
      \omega_H(n) = \omega_{H}^*(n) + D + o(1)
  \end{displaymath}
  where $D = - 2\sum_{j=1}^\infty j^\alpha|G_j|<0$ is a constant.
\end{theorem}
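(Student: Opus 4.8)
The plan is to ride the convolution identities already assembled and reduce the VTF statement to the large-$n$ asymptotics of a single sequence. The backbone is the chain $\gamma_H=\gstar\star G$ (Lemma~\ref{lem:conv_exists}) together with the definition $\omega_H=\I\gamma_H$; the two facts needed to turn this into the claimed expansion are Lemma~\ref{lem:doub-int-of-conv}, which lets me move $\I$ through the convolution, and Lemma~\ref{lem:main-lem-2}, which supplies the asymptotics of a power-law sequence convolved against a weight-summable one. All the hypotheses these lemmas require have already been prepared in Theorem~\ref{lem:spec-dens-ratio}(iv) and Corollary~\ref{cor:Iconv_exists}, so the structural part should be essentially bookkeeping.

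First I would apply Lemma~\ref{lem:doub-int-of-conv} with $a=\gstar$ and $b=G$. Its two hypotheses hold: $c:=\gstar\star G=\gamma_H$ exists by Lemma~\ref{lem:conv_exists}, and $(\I a)\star b=\omega_H^*\star G$ exists by Corollary~\ref{cor:Iconv_exists} (recall $\I\gstar=\omega_H^*$). The lemma then gives
\begin{displaymath}
    \omega_H=\I\gamma_H=\omega_H^*\star G-\bigl(\omega_H^*\star G\bigr)_0,
\end{displaymath}
exhibiting the VTF as the fixed-point VTF convolved with $G$, corrected by the constant that accounts for the double-counting at the origin recorded in Lemma~\ref{lem:doub-int-of-conv}.

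Next I would extract the large-$n$ behaviour of $\omega_H^*\star G$ from Lemma~\ref{lem:main-lem-2}, taking $a=\omega_H^*$ (so that $\alpha=2H\in(1,2)$) and $b=G$. The weighted-summability hypothesis $\sum_{j\ge1}j^{2H}|G_j|<\infty$ is available because $2H<2$ and $\sumj j^2|G_j|<\infty$ by Theorem~\ref{lem:spec-dens-ratio}(iv). The lemma yields $\bigl(\omega_H^*\star G\bigr)_n-S_G\,\omega_H^*(n)\to0$, where $S_G=\sumj G_j=g(0)=\cffar/\cffgn=1$ under the normalisation $\cffar=\cffgn$ (the series converging to $g(0)$ by the absolute convergence in Theorem~\ref{lem:spec-dens-ratio}(iv)). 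Substituting into the previous display gives
\begin{displaymath}
    \omega_H(n)=\omega_H^*(n)+D+o(1),\qquad D:=-\bigl(\omega_H^*\star G\bigr)_0,
\end{displaymath}
which is the claimed form; evaluating the origin term using evenness of $\omega_H^*$ and $G$ together with the convention $\omega_H^*(0)=0$ collapses it to $D=-2\sum_{j=1}^{\infty}\omega_H^*(j)\,G_j$, a sum that is absolutely convergent by the same weighted estimate.

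I expect the genuinely delicate point to be pinning down the \emph{sign} of $D$ (and hence the precise coefficient form in the statement): the structural reduction above is routine once the three lemmas are in hand, but the sign of $\sum_{j\ge1}\omega_H^*(j)G_j$ is not automatic, since $\alpha=2H$ is not an even integer and so $D$ is not simply a derivative of $g$ at the origin (the even integer case would let one read it off from $g''(0)$). The natural route is to tie this weighted sum to the local shape of $g$ near $x=0$ established in Theorem~\ref{lem:spec-dens-ratio} --- its flatness, its $\Lambda_{2H-1}$ regularity, and the monotone behaviour visible in Figure~\ref{fig:geg} --- since these fix the signs of the relevant Fourier coefficients $\{G_j\}$ and thereby the sign of the constant correction. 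This is where I would spend the real effort, and I would want to cross-check the conclusion against a concrete case such as FARIMA$(0,d,0)$ before committing to it.
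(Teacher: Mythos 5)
Your proposal is correct and follows essentially the same route as the paper: Lemma~\ref{lem:doub-int-of-conv} with $a=\gstar$, $b=G$, $c=\gamma_H$ gives $\omega_H=\omega_H^*\star G-(\omega_H^*\star G)_0$, Lemma~\ref{lem:main-lem-2} supplies $(\omega_H^*\star G)_n-S_G\,\omega_H^*(n)\to0$, and $S_G=g(0)=1$ closes the argument, exactly as in the paper. The one point you flag as delicate --- passing from $D=-2\sum_{j\ge1}\omega_H^*(j)G_j$ to the stated form $-2\sum_{j\ge1}j^{2H}|G_j|<0$, which requires $G_j\ge0$ for $j\ge1$ --- is in fact asserted without further justification in the paper's own proof, so your instinct to verify it (e.g.\ on FARIMA$(0,d,0)$) identifies a genuine gap in the original rather than in your argument.
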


\begin{proof}
%The remainder of the proof consists in combining
%Lemmas~\ref{lem:main-lem-2} and \ref{lem:doub-int-of-conv}, where in each we set $b=G$.  
%As before the condition on $b$ in Lemma~\ref{lem:main-lem-2} holds since 
%$\sum_{j=1}^\infty j^\alpha |G_j| < \sum_{j=1}^\infty j^2 |G_j|$ which is finite from Lemma~\ref{lem:spec-dens-ratio}.
%Recall that $\omega_H^*(n)=|n|^{2H}$. By identifing $a$ with $\omega_H^*$ and $\alpha$ with $2H$,
%Lemma~\ref{lem:main-lem-2} states that $\omega_H^* \star G$ exists. 
Since each of $\gstar\star G$ and $\omega_H^* \star G$ exist, Lemma~\ref{lem:doub-int-of-conv} applies upon identifying $a=\gstar$, $b=G$ and $c=\gamma_H$ and states that 
$\omega_H=\omega_H^* \star G-\{\omega_H \star G\}(0)$.  
>From Lemma~\ref{lem:main-lem-2} with $b=G$, $S_G=\sumj G_j<\infty$ exists.  
By introducing the term $S_G\omega_H^*$ we obtain
\begin{eqnarray*}
  \omega_H  &=& S_G\omega_H^* + \bl \omega_H^* \star G - S_G \omega_H^* \br  - \{\omega_H \star G\}(0)\\
                     &=& S_G\omega_H^* +  o(1)                                                             - 2\sum_{j=1}^\infty j^\alpha|G_j| 
\end{eqnarray*}
by the final part of Lemma~\ref{lem:main-lem-2}.
Since $S_G=g(0)=\cffar/\cffgn=1$, the result follows.
\end{proof}

The key property underlying this result is $\omega_H^* \star G - S_G
\omega_H^* \ngoinf 0$, which shows that $G$ is `compact' enough to act
as an aggregate multiplier $S_G$ asympotically. This is analogous to
the role the covariance sum $S_\g :=\sum_{k=-\infty}^\infty\g(k)$
plays in the asymptotic variance of aggregated short-range dependence
processes \cite{discrete}.

%%%%%%%%%%%%%%%%%%%%%%%%%%%%%%%%%%%%%%%%%%%%%%%%%%%%%%
\subsection{Atypicality and Speed of Convergence}
\label{ssec:atyp} 

Theorem~\ref{thm:conv-of-farima-vtf-2} showed that the VTF of a
fractionally differenced process is asymptotically equal to the VTF of
its fGn fixed point up to an additive constant.  This makes
fractionally differenced process highly atypical among LRD processes.
We show this first for the VTF itself, and then for the speed of
convergence of the CTF to the fixed point.

Without loss of generality, the VTF of any time series in the domain of attraction of a
given fGn can be expressed as 
\be
    \w_H(n) = \w_H^*(n) + \w_d(n)   
    \label{eq:wd}
\ee
where $\w_d$ represents the distance of the VTF from its
limiting fGn counterpart.  By definition, $\w_d(n)= o(n^{2H})$, but
otherwise the growth rate of $\w_d$  is not constrained,
implying that there is considerable variety within the domain of attraction.

One way of  characterising the size of the difference $\w_d(n)$ is to 
use \textit{regular variation} \cite{bgt,discrete}.  
 A regularly varying function $f(n)$ of index $\beta$ and integer argument $n\in\N^+$ satisfies
$\lim_{k\to\infty} f(kn)/f(k) = n^\beta$, $\beta\in\R$.
Assume without loss of generality that $\w_d$ is upper bounded by a regularly varying function of index $\beta\in[0,2H]$, that is
\be
    \w_d(n) = O( s(n)n^\beta ) ,
    \label{eq:RV}
    \ee where $s$ is a \textit{slowly varying} function (that is
    regularly varying with index $0$), and $\beta$ is the infimum of
    indices for which (\ref{eq:RV}) holds.  A notion of
    \textit{closeness} of the process to the limiting fGn can then be
    defined in terms of $\beta$, where the smaller the index, the
    closer the process.
%{\r This simple closeness definition is sufficient for our purposes here. It could be readily generalised to arbitrary $w_d$ by regularly varying upper bounds of index $\beta$.}

According to this scheme, Theorem~\ref{thm:conv-of-farima-vtf-2}
states that fractionally differenced processes belong in the closest
layer of the hierarchy, corresponding to $\beta=0$.  Furthermore, the
theorem shows that $s(n)$ (which could in general diverge, for example
$s(n)\!\!\ngoinfsim\!\log(n)$) tends to a constant.  Thus, the VTF of
a fractionally differenced process lies in a very tight neighbourhood
indeed of the VTF of its limiting fixed point.  Far from being typical
LRD processes, they deviate only in very subtle ways from fGn in terms
of their large lag behaviour.

>From \eqref{eq:phim}, there is a direct relationship between
closeness in the above sense and speed of convergence of the CTF to
its fixed point under aggregation.

\begin{theorem}
 \label{thm:convspeed}
  Let $\p_H$ denote the CTF of a fractionally differenced process in the domain of attraction of $\p_H^*$ with $H\in(1/2,1)$. 
  Then 
  \begin{displaymath}
      \m\p_H(n) = \p_{H}^*(n) + D(1-n^{2H})m^{-2H} + o(m^{-2H}) = \p_{H}^*(n) + O(m^{-2H})
  \end{displaymath}
  where $D$ is the constant from Theorem~\ref{thm:conv-of-farima-vtf-2}.
\end{theorem}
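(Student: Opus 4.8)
The plan is to reduce this statement to the VTF asymptotics of Theorem~\ref{thm:conv-of-farima-vtf-2} using the renormalisation identity \eref{eq:phim}, which expresses the aggregated CTF purely through the VTF as $\m\p_H(n) = \w_H(mn)/\w_H(m)$. Because the process variance $\w_H(1)$ cancels in this ratio, no information beyond the VTF is required. First I would substitute the expansion $\w_H(k) = \w_H^*(k) + D + o(1) = \vx k^{2H} + D + o(1)$ from Theorem~\ref{thm:conv-of-farima-vtf-2} at the two arguments $k=mn$ and $k=m$. For fixed $n$ and $m\to\infty$ both arguments diverge, so both expansions are legitimate and their $o(1)$ remainders are genuinely $o(1)$; note that no uniformity in $n$ is needed since $n$ is held fixed.

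Next I would factor the leading term $\w_H^*(m)=\vx m^{2H}$ out of numerator and denominator, obtaining
\[
   \m\p_H(n) = \frac{n^{2H} + D(\vx m^{2H})^{-1} + o(m^{-2H})}{1 + D(\vx m^{2H})^{-1} + o(m^{-2H})}.
\]
Since the denominator tends to $1$ it is bounded away from zero for large $m$, so I may expand its reciprocal as $1 - D(\vx m^{2H})^{-1} + o(m^{-2H})$. Multiplying out, the constant term reproduces $\p_H^*(n)=n^{2H}$, while the two first-order contributions $+D(\vx m^{2H})^{-1}$ (from the numerator) and $-n^{2H}D(\vx m^{2H})^{-1}$ (from expanding the reciprocal) combine into a constant multiple of $(1-n^{2H})m^{-2H}$, matching the stated first-order term $D(1-n^{2H})m^{-2H}$ once the variance factor carried by $\w_H^*$ is tracked consistently with the normalisation $\cffar=\cffgn$ that fixes the limiting fGn.

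The remaining work is purely organisational error bookkeeping: the $o(1)$ remainders in numerator and denominator become $o(m^{-2H})$ after division by $\vx m^{2H}$, and the cross terms in the product of the two expansions are $O(m^{-4H})=o(m^{-2H})$, so everything beyond first order is absorbed into the $o(m^{-2H})$ term. The final equality $\m\p_H(n)=\p_H^*(n)+O(m^{-2H})$ is then immediate, since both the explicit first-order term and the $o(m^{-2H})$ remainder are $O(m^{-2H})$ for fixed $n$. I expect the only (mild) obstacle to be ensuring the two-sided ratio expansion is carried out cleanly and that the leading variance factor is accounted for, so that the constant in the first-order term reads exactly $D$ rather than a rescaling of it; the genuinely hard analytic content has already been discharged in Theorem~\ref{thm:conv-of-farima-vtf-2}.
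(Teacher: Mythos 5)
Your proposal is correct and is essentially the paper's own proof: the paper likewise obtains the result by substituting the expansion $\omega_H(k)=\omega_H^*(k)+D+o(1)$ from Theorem~\ref{thm:conv-of-farima-vtf-2} into the ratio \eref{eq:phim} and expanding the reciprocal of the denominator via $(1+x)^{-1}=1-x+O(x^2)$; you simply spell out the bookkeeping. Your aside about tracking the variance factor $\vx$ (so that the first-order coefficient reads $D$ rather than $D/\vx$) is a fair point that the paper itself glosses over, but it does not change the $O(m^{-2H})$ conclusion.
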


\begin{proof}
The result follows from substituting $\omega_H(n) = \omega_{H}^*(n) + D + o(1)$ from 
Theorem~\ref{thm:conv-of-farima-vtf-2} in (\ref{eq:phim}) and using $(1+x)^{-1}=1-x + O(x^2)$.
\end{proof}  
  
Beginning from (\ref{eq:wd}), it holds generally for LRD processes in the DoA of $\p_H^*$
  that $\m\p_H(n) = \p_{H}^*(n) + O(s(m)m^{-2H+\beta})$.
% where $\beta$ is regular variation index from (\ref{eq:RV}).
It follows that fractionally differenced processes, for which $\beta=0$ and $s(m)$ is identically equal to a constant, 
converge faster to the fixed point compared to all other
processes in the DoA.  Examples are provided in Section~\ref{sec:brittle}.

%%%%%%%%%%%%%%%%%%%%%%%%%%%%%%%%%%%%%%%%%%%%%%%%%
%%%%%%%%%%%%%%%%%%%%%%%%%%%%%%%%%%%%%%%%%%%%%%%%%
\section{Closeness of the ACVF}
\label{sec:ACVF}

Recall that $\w=\I\g$.  Because the double sum operator $\I$ smooths out local variations,
Theorem~\ref{thm:conv-of-farima-vtf-2} can not be
used to derive an explicit characterisation of the closeness in terms
of the ACVF. 
%Or less formally stated, one cannot determine the rate of growth of a derivative from its primitive.  
We therefore set out to provide a closeness result for the ACVF here.  
Not only is this of interest in its own right, it also provides an alternative way of
demonstrating the closeness to fGn, as well as leading to an
additional result on the spectral closeness to fGn in an additive sense.

The following lemma is the analogue of Lemma~\ref{lem:main-lem-2} used for the ACVF.  
A proof is given in the appendix.
\begin{lemma}
\label{lem:main-lem}
Assume $-1 \leq \alpha < 0$ and let $a$ be the symmetric positive sequence $a_n=|n|^\alpha$, $n \neq 0$  and $a_0>0$. 
Let $b$ be a symmetric sequence with $|b_0|<\infty$ for which there exists  $\beta \in [0,2]$ such that 
$\sum_{j=1}^\infty j^\beta|b_j|<\infty$ and $|b_n| = O(n^{-(\beta+1)})$. 
Then $S_b:=\sum_{j=-\infty}^\infty b_j$ and the symmetric sequence $c:=a\star b$ exist, 
and $c_n- S_ba_n=O(n^{\alpha-\beta})$ as $n \to \infty$.
\end{lemma}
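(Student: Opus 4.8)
The plan is to establish the three claims in the order stated: existence of $S_b$, existence of the symmetric convolution $c=a\star b$, and the asymptotic rate $c_n-S_ba_n=O(n^{\alpha-\beta})$. The first two are routine. Since $\beta\geq0$ we have $|b_j|\leq j^\beta|b_j|$ for $j\geq1$, so $\sum_{j\geq1}|b_j|\leq\sum_{j\geq1}j^\beta|b_j|<\infty$; together with $|b_0|<\infty$ and symmetry this gives absolute convergence of $S_b=\sum_j b_j$. For $c$, note that $\alpha<0$ forces $|j|^\alpha\leq1$ for $|j|\geq1$, so $a$ is bounded by $M:=\max(a_0,1)$; convolving a bounded sequence with the absolutely summable $b$ makes $c_n=\sum_j a_j b_{n-j}$ converge absolutely for every $n$, and $c$ is symmetric because $a$ and $b$ are.

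The substance is the rate. I would write $c_n-S_ba_n=\sum_k(a_{n-k}-a_n)b_k$ and then, crucially, exploit the symmetry of $b$ to regroup the $k$ and $-k$ terms into a central second difference,
\[
   c_n-S_ba_n=\sum_{k\geq1}\big(a_{n-k}+a_{n+k}-2a_n\big)\,b_k,
\]
the $k=0$ term vanishing. This regrouping is the key idea: the first-order contribution, proportional to $\sum_k k\,b_k$, cancels by symmetry, so the near-field behaviour is governed by a \emph{second} difference of $x\mapsto x^\alpha$, which is what makes the target rate attainable. A purely first-order estimate yields only $O(n^{\alpha-1})$, too weak once $\beta>1$.

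Next I would split the sum at $k\asymp n/2$. In the near field $1\leq k\leq n/2$, the second-order mean value theorem for central differences gives $|a_{n-k}+a_{n+k}-2a_n|\leq C k^2 n^{\alpha-2}$, since the intermediate point exceeds $n/2$ and $\alpha-2<0$. Combining this with $\sum_{1\leq k\leq n/2}k^2|b_k|=O(n^{2-\beta})$ (from $|b_k|=O(k^{-(\beta+1)})$ when $\beta<2$, or directly from $\sum k^2|b_k|<\infty$ when $\beta=2$) yields $O(n^{\alpha-2}\cdot n^{2-\beta})=O(n^{\alpha-\beta})$ for the near field. In the far field $k>n/2$ I would separate the three terms: both $\sum_{k>n/2}a_{n+k}b_k$ and $a_n\sum_{k>n/2}b_k$ are immediately $O(n^{\alpha-\beta})$, since $a_{n+k},a_n=O(n^\alpha)$ and the tail $\sum_{k>n/2}|b_k|=O(n^{-\beta})$ (from the decay of $b$, with the $\ell^1$ bound covering $\beta=0$).

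The main obstacle is the remaining far-field term $\sum_{k>n/2}a_{n-k}b_k$, for which the naive bound $M\sum_{k>n/2}|b_k|=O(n^{-\beta})$ is a factor $n^{-\alpha}$ too large. Here I would reindex by $m=k-n$, giving $\sum_{m>-n/2}|m|^\alpha\,b_{n+m}$, and split on the size of $m$. For $|m|\lesssim n$ one has $b_{n+m}=O(n^{-(\beta+1)})$ while $\sum_{|m|\lesssim n}|m|^\alpha=O(n^{\alpha+1})$ — this is precisely where the hypothesis $\alpha>-1$ enters, controlling the partial sums of $|m|^\alpha$ — so this part is $O(n^{\alpha+1}\cdot n^{-(\beta+1)})=O(n^{\alpha-\beta})$. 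For $m>n$ one has $n+m\asymp m$ and $|m|^\alpha\leq1$, so the contribution is at most $C\sum_{m>n}m^{\alpha-\beta-1}=O(n^{\alpha-\beta})$, the exponent $\alpha-\beta-1<-1$ guaranteeing convergence. Assembling the near- and far-field estimates gives $c_n-S_ba_n=O(n^{\alpha-\beta})$. (At the endpoint $\alpha=-1$ the partial sums of $|m|^\alpha$ acquire a logarithmic factor; this case is not needed for the application to $H\in(1/2,1)$, where $\alpha=2H-2\in(-1,0)$.)
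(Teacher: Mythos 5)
Your proof is correct and follows essentially the same route as the paper's: the same symmetrization of $c_n - S_b a_n$ into $\sum_{j\ge1}\bigl(a_{|n-j|}+a_{n+j}-2a_n\bigr)b_j$, the same second-order mean-value bound $O(j^2 n^{\alpha-2})$ for $j\le n/2$, and the same combination of the pointwise decay of $b$ with $\sum_{i\le n}i^\alpha=O(n^{\alpha+1})$ in the range $j\asymp n$; the only difference is bookkeeping (you split the far field by separating the three terms of the second difference and reindexing, while the paper keeps $T_n^j$ intact and cuts again at $j=2n$). Your parenthetical on the endpoint $\alpha=-1$ is apt: the paper's own middle-range estimate picks up the same logarithmic factor there, and, as you note, that case is not needed for the application where $\alpha=2H-2\in(-1,0)$.
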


We can now prove the ACVF closeness result
%%%%%%%%%%%%%%%%%%%%
\begin{theorem}
 \label{prop:conv-of-farima}
  Let $\g_H$ denote the ACVF of a fractionally differenced process with $H\in(1/2,1)$ and $\cffar$ chosen equal to $\cffgn$. 
  Then
  \begin{displaymath}
    \gamma_H(n) = \gamma_H^*(n) + O(n^{2H-4})
  \end{displaymath}
\end{theorem}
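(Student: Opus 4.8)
The plan is to run the argument entirely in the time domain, starting from the convolution identity $\g_H = \gstar \star G$ provided by Lemma~\ref{lem:conv_exists}. Two facts from Theorem~\ref{lem:spec-dens-ratio} drive everything: since $\cffar=\cffgn$ we have $S_G := \sumj G_j = g(0) = \cffar/\cffgn = 1$, and the coefficients decay as $G_n = O(n^{-3})$ with $\sumjj j^2|G_j| < \infty$. The idea is to peel $\gstar$ into a pure power plus a small, absolutely summable remainder, dispatch the pure power with Lemma~\ref{lem:main-lem}, and control the remainder by an elementary splitting of the convolution sum.

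First I would record the fine structure of the fGn autocovariance. Since $\omega_H^*(n) = \vx n^{2H}$ and the ACVF is recovered from the VTF by the second-difference relation $\gstar(n) = \tfrac12\big(\omega_H^*(n+1) - 2\omega_H^*(n) + \omega_H^*(n-1)\big)$ for $n\ge1$, a Taylor expansion of the second difference of $x\mapsto x^{2H}$ gives
\[
   \gstar(n) = c_\gamma |n|^{2H-2} + r(n), \qquad r(n) = O(n^{2H-4}),
\]
with $c_\gamma = \vx H(2H-1)$. Two properties of $r$ are used: it is $O(n^{2H-4})$ at infinity, and it is absolutely summable, $\sumj|r(n)| < \infty$, because $2H-4 < -1$. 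Setting $a_n = |n|^{2H-2}$ (with $a_0$ assigned any positive value, which perturbs only finitely many indices) I split $\gstar = c_\gamma a + r$, whence $\g_H = c_\gamma\,(a\star G) + (r\star G)$.

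For the leading term I apply Lemma~\ref{lem:main-lem} with $\alpha = 2H-2 \in (-1,0)$ and $b = G$, taking $\beta = 2\in[0,2]$: the hypotheses $\sumjj j^2|G_j|<\infty$ and $|G_n| = O(n^{-3}) = O(n^{-(\beta+1)})$ hold by Theorem~\ref{lem:spec-dens-ratio}, so the lemma yields $(a\star G)_n = S_G\,|n|^{2H-2} + O(n^{\alpha-\beta}) = |n|^{2H-2} + O(n^{2H-4})$, using $S_G=1$. The remainder $(r\star G)_n = \sumj r(n-j)G_j$ I handle by splitting at $|j| = n/2$: on $|j|\le n/2$ one has $n-j \ge n/2$, so $r(n-j) = O(n^{2H-4})$ uniformly and $\sum_{|j|\le n/2}|G_j| \le \sumj|G_j| < \infty$ contributes $O(n^{2H-4})$; on $|j| > n/2$ one uses $\max_{|j|>n/2}|G_j| = O(n^{-3})$ together with $\sumj|r| < \infty$ to get $O(n^{-3})$. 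Since $H>1/2$ gives $2H-4 > -3$, the first bound dominates, so $(r\star G)_n = O(n^{2H-4})$. Adding the two pieces gives $\g_H(n) = c_\gamma|n|^{2H-2} + O(n^{2H-4})$, and subtracting $\gstar(n) = c_\gamma|n|^{2H-2} + r(n)$ with $r(n) = O(n^{2H-4})$ yields the claim.

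The main obstacle is the remainder convolution $(r\star G)$. A crude tail estimate that replaces $r(n-j)$ by $\sup|r|$ and sums $|G_j|$ over $|j|>n/2$ produces only $O(n^{-2})$, which is strictly weaker than the target $O(n^{2H-4})$. The resolution is to exploit that $r$ is \emph{absolutely summable} rather than merely bounded, bounding the tail by $\big(\max_{|j|>n/2}|G_j|\big)\sumj|r| = O(n^{-3})$. The whole estimate then rests on the elementary inequality $2H-4 > -3$, equivalent to $H>1/2$, which guarantees that the genuine $O(n^{2H-4})$ error coming from the leading-power convolution is the dominant contribution.
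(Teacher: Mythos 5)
Your proof is correct, and it reaches the result by the same basic route as the paper --- the leading power-law term of $\gstar$ is dispatched by Lemma~\ref{lem:main-lem} with exactly the same parameters ($\alpha=2H-2$, $\beta=2$, $b=G$, $S_G=1$) --- but with a genuinely different decomposition of the fGn autocovariance. The paper expands $\gstar(n)=\sumjj c_j f_j(n)$ as a full Taylor series in powers $|n|^{2H-2j}$, interchanges the resulting double sum with the convolution by $G$ (justified by absolute convergence), and then bounds the convolutions $(f_j\star G)(n)$ for all $j\ge2$ uniformly via a three-way split of the sum over $k$, using $|k|^{2H-2j}\le|k|^{2H-4}$ and $G_n=O(n^{-3})$. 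You instead truncate after the first term, writing $\gstar=c_\gamma a+r$ with $r$ absolutely summable, and control the single remainder convolution $(r\star G)$ by one split at $|j|=n/2$; the tail contributes only $O(n^{-3})$ because you pair the $O(n^{-3})$ decay of $G$ with $\sum|r(n)|<\infty$ rather than with a mere sup bound, and $2H-4>-3$ (i.e.\ $H>1/2$) makes the $O(n^{2H-4})$ piece dominate --- the same inequality the paper relies on. Your version avoids the interchange-of-summations step and the need to handle infinitely many power terms at once, at the cost of losing the (unused) extra information that each $(f_j\star G)(n)-f_j(n)$ is individually $O(n^{2H-4})$. One cosmetic point: assigning $a_0$ an arbitrary positive value changes $(a\star G)_n$ at every $n$ by $a_0G_n$, not at finitely many indices, but this term is $O(n^{-3})$ and Lemma~\ref{lem:main-lem} is in any case stated for arbitrary $a_0>0$, so nothing is lost.
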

%% Proof Th1
\begin{proof} 
%[Proof of Thm.~\ref{prop:conv-of-farima}]
  The exact ACVF of a unit variance fGn$(H)$ is given by
\begin{displaymath}
    \gamma_{H}^*(n) = \frac{1}{2}\big((n+1)^{2H}+(n-1)^{2H}-2n^{2H}\big),
\end{displaymath}
for $n \geq 0$ and $\gamma_H^*(n)=\gamma_H^*(-n)$ for $n<0$.  
Then $\gamma_{H}^*(0) = 1$, and for $n\ne0$ $\gamma_{H}^*(n)=(1/2)|n|^{2H}k(|n|^{-1})$ where $k(x):=(1+x)^{2H}+(1-x)^{2H}-2$. Expanding $k$ in a Taylor series around the origin, we obtain the following series representation:
\begin{displaymath}
   \label{eq:rep-of-acvf-fgn}
    \gamma_{H}^*(n) = \sum_{j=1}^\infty c_jf_j(n), \quad c_j:=\frac{\prod_{i=0}^{2j-1}(2H-i)}{(2j)!},
                                         \quad f_j(n) := \left\{\begin{array}{ll}
                                                                        |n|^{2H-2j}                      & n\ne0\\
                                                                        \mathbf{1}\{j=1\}/c_1   & n=0
                                                                     \end{array}   \right.
\end{displaymath}
which is uniformly absolutely convergent since $\{a_j\}$ is absolutely convergent by the ratio test. \\
%For $n=0$ we set $f_j(0)=0$ when $j>1$ and $f_1(0)=1/a_1$. \\
Now $\gamma_H(n)= \big(\gstar \star G\big)(n) = \sumk G(k)\sumjj
c_jf_j(n-k)  = \sumjj c_j\big( f_j\star G\big)(n)$ where the existence of
$\gamma_H^* \star G$ and $\gamma_H^*$ as absolutely convergent series
justifies the interchange of summations (\cite{Apostol74}, Thm.~8.43).
We can now compare $\gamma_H$ and $\gstar$ as
%\eqref{eq:decom-of-acvf}, 
\begin{eqnarray}
   \label{eq:est-for-diff}
   |\gamma_H(n) - \gamma_{H}^*(n) | &=& 
   \sumjj   |c_j| \big| (f_{\!j} \star G)(n) - f_j(n) \big|  \\
   &\leq& |c_1|\big|(f_1 \star G)(n)-f_1(n)\big|  + \sum_{j=2}^\infty |c_j|\big|(f_{\!j} \star G)(n)\big| + O(n^{2H-4});
\end{eqnarray}
%{\r since the leading term of $\gstar$ after $f_1(n)$ is $f_2(n)=|n|^{2H-4}$.}
We shall show that each of the terms on the right hand side are of order $O(n^{2H-4})$.\\ 
The result for the first term follows immediately from Lemma~\ref{lem:main-lem} upon identifying 
$f_1$ with $a$, $2H-2$ with $\alpha$, $G$ with $b$ with a choice of 
$\beta=2$ (justified by Theorem~\ref{lem:spec-dens-ratio}(iii)), 
and noting that $\sum_{j=-\infty}^\infty G_j=1$ by the assumption $\cffar=\cffgn$.

Now consider the second term.  Recall from
Theorem~\ref{lem:spec-dens-ratio} that $G_n=O(n^{-3})$, i.e.~there
exists $K>0$ such that $G_n\le K |n|^{-3}$ for $n$ sufficiently
large. Thus, when $j \geq 2$ and for $n>0$ large enough
\begin{eqnarray*}
 |(f_j \star G)(n)| &=&  \sumk |f_j(k)| |G_{n-k}| = \sum_{\substack{k=-\infty\\k \neq 0}}^\infty |k|^{2H-2j} |G_{n-k}| 
                                                                                   \le \sum_{\substack{k=-\infty\\k \neq 0}}^\infty |k|^{2H-4} |G_{n-k}| \\
   &=& \sum_{k=1}^\infty |k|^{2H-4} |G_{n+k}|  +   \sum_{k=1}^{\lfloor{n/2}\rfloor} |k|^{2H-4} |G_{n-k}|  + \!\sum_{k=\lfloor{n/2}\rfloor+1}^\infty \!|k|^{2H-4} |G_{n-k}|  \\
   &\le&   K\sum_{k=1}^\infty |k|^{2H-4} (n+k)^{-3}  +  K\!\sum_{k=1}^{\lfloor{n/2}\rfloor} |k|^{2H-4} (n-k)^{-3}  + \Big|\frac{n}{2}\Big|^{2H-4}\!\!\!\sum_{k=\lfloor{n/2}\rfloor+1}^\infty \!|G_{n-k}|  \\
  &\le&   K\sum_{k=1}^\infty (kn+k^2)^{2H-4}  +  K\!\sum_{k=1}^{\lfloor{n/2}\rfloor} (kn-k^2)^{2H-4} + \Big|\frac{n}{2}\Big|^{2H-4}\sumk \!|G_{n-k}|  \\
  &<&   K\sum_{k=1}^\infty (kn)^{2H-4}  +  K\!\sum_{k=1}^{\lfloor{n/2}\rfloor} (kn/2)^{2H-4} + \Big|\frac{n}{2}\Big|^{2H-4} \sumk \!|G_{k}| \\
 &=&  O(n^{2H-4});
\end{eqnarray*}
using $2H-4\ge-3$, that $kn+k^2 \geq kn$ for all $k$, $kn-k^2 \ge
nk/2$ for $1 \leq k \leq n/2$, the absolute summability of $G$, and
the fact that $\sum_{k=1}^\infty |k|^{2H-4}<\infty$.  Hence the right hand side of \eqref{eq:est-for-diff} is $O(n^{2H-4})$.
\end{proof}
In Section~\ref{ssec:closespec} we derived a result which may best be
described as `multiplicative closeness' for the spectrum of a
fractionally differenced process. This form of closeness was natural for
providing a subsequent link to the time domain. However, when
calculations with the frequency domain are of specific interest, an
additive closeness result for the spectrum is useful. Such a
result can easily be derived from the above theorem.
\begin{coroll}
\label{cor:spectral-closeness}
  It holds that $f_H(x) = f^*_{H}(x)+\varphi(x)$ where $\varphi$ is differentiable,
  $\varphi' \in \Lambda_{\alpha}$ if $\alpha<2-2H$, and
  $\varphi(0)=0$. Moreover,  $\varphi(x)=O(x^{-2H+3})$ as
  $x \to 0$.
\end{coroll}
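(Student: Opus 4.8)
The plan is to transfer the auto-covariance closeness of Theorem~\ref{prop:conv-of-farima} into the spectral domain through the Fourier correspondence, and to read off the behaviour near the origin from the multiplicative representation $f_H=f_H^*g$ of \eqref{eq:firsttimeg}. Accordingly I would define $\varphi:=f_H-f^*_H$. Since $f_H$ and $f^*_H$ are the spectral densities attached to $\g_H$ and $\gstar$, for $x\neq0$ the function $\varphi$ is represented by the Fourier series whose coefficients are $\psi(k):=\g_H(k)-\gstar(k)$, and Theorem~\ref{prop:conv-of-farima} gives $\psi(k)=O(k^{2H-4})$. Everything then reduces to exploiting this decay rate.

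First I would establish the regularity of $\varphi$. Because $2H-4<-2$, the sequences $\{\psi(k)\}$ and $\{k\,\psi(k)\}$ are both absolutely summable, so $\varphi$ is continuous and term-by-term differentiation is legitimate (as in Theorem~\ref{lem:spec-dens-ratio}, via \cite{Champeney90}), giving $\varphi\in C^1$ with $\varphi'(x)=\sum_k 2\pi i k\,\psi(k)\,e^{2\pi i k x}$ and Fourier coefficients $O(k^{2H-3})$. To obtain $\varphi'\in\Lambda_\alpha$ I would invoke the standard Bernstein-type criterion: a function whose Fourier coefficients $c_k$ satisfy $\sum_k|k|^{\alpha}|c_k|<\infty$ lies in $\Lambda_\alpha$ for $0<\alpha<1$, which follows from $|e^{2\pi i k x}-e^{2\pi i k y}|\le 2\bl 2\pi|k|\,|x-y|\br^{\alpha}$. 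Applied to $\varphi'$ this requires $\sum_k|k|^{1+\alpha}|\psi(k)|<\infty$, i.e.\ $1+\alpha+(2H-4)<-1$, which holds exactly when $\alpha<2-2H$; note $2-2H\in(0,1)$ for $H\in(1/2,1)$, so the criterion is always in range.

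Finally I would extract the behaviour at the origin from the multiplicative form $\varphi(x)=f^*_H(x)\bl g(x)-1\br$. By Theorem~\ref{lem:spec-dens-ratio}, $g$ is even and twice differentiable with $g(0)=\cffar/\cffgn=1$ and bounded $g''$; evenness forces $g'(0)=0$, so a second-order Taylor estimate gives $g(x)-1=O(x^2)$. Combined with $f^*_H(x)\xgozerosim\cffgn|x|^{-(2H-1)}$ this yields $\varphi(x)=O\bl|x|^{-(2H-1)+2}\br=O(x^{-2H+3})$ as $x\to0$. Since $-2H+3>0$ and $\varphi$ is continuous, this bound forces $\varphi(0)=0$. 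The main obstacle is bookkeeping rather than conceptual: I must confirm that the two descriptions of $\varphi$ — the absolutely convergent Fourier series (used for differentiability and the H\"older regularity) and the multiplicative expression $f^*_H(g-1)$ (used for the near-origin order and the value at the origin) — agree for $x\neq0$, so that continuity lets the $O(x^{-2H+3})$ estimate determine $\varphi(0)$; and I would double-check the exponent count so that the threshold $\alpha<2-2H$ is obtained with the boundary correctly excluded.
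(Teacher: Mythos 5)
Your proposal is correct. Its first half coincides with the paper's own argument: both define $\varphi=f_H-f_H^*$, identify its Fourier coefficients with $d_n=\gamma_H(n)-\gstar(n)=O(|n|^{2H-4})$ via Theorem~\ref{prop:conv-of-farima}, and deduce differentiability plus the H\"older regularity of $\varphi'$ from moment summability of the coefficients; the only difference there is that the paper outsources the last step to Theorem~7.19 of \cite{Kufner71}, whereas you prove the Bernstein-type criterion from scratch via $|e^{2\pi ikx}-e^{2\pi iky}|\le 2(2\pi|k||x-y|)^\alpha$, and your exponent count $1+\alpha+2H-4<-1$ gives exactly the stated threshold $\alpha<2-2H$. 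Where you genuinely diverge is the estimate $\varphi(x)=O(x^{-2H+3})$: the paper's appendix expands $h(x)$, $(\sin\pi x/\pi x)^{-2H+1}$ and the lattice sum term by term about the origin and verifies that the singular $x^{-2H+1}$ contributions cancel, while you write $\varphi=f_H^*\,(g-1)$ and reuse Theorem~\ref{lem:spec-dens-ratio}: $g$ is even and twice differentiable with $g''$ bounded (being in $\Lambda_{2H-1}$ on a compact interval), so $g'(0)=0$ and $g(x)-1=O(x^2)$, which multiplied by $f_H^*(x)=O(|x|^{-(2H-1)})$ gives the claim in one line and also forces $\varphi(0)=0$ by continuity. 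Your route is shorter and recycles work already done; the paper's direct expansion is self-contained and exhibits the cancellation explicitly. Two small points: both arguments tacitly require the normalisation $\cffar=\cffgn$ (so that $g(0)=1$; otherwise $\varphi$ diverges at the origin), which the corollary statement leaves implicit; and the reconciliation you flag at the end is routine, since the absolutely convergent Fourier series defines a continuous function agreeing with $f_H-f_H^*$ for $x\neq0$, and continuity then determines $\varphi(0)$ consistently with your bound.
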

\begin{proof}
  Let $\varphi:=f_H - f^*_{H}$.  The Fourier series of $\varphi$
  exists and equals $\varphi$, and its coefficients are given by
  $d_n=\gamma_H(n) - \gstar(n)$, which by Theorem~\ref{prop:conv-of-farima} is
  $O(|n|^{2H-4})$.  Since $2H-4<-2$ the first absolute moment of the
  coefficients exists, so Theorem~7.19 \cite{Kufner71}, applies and
  shows that $\varphi'$ exists and $\varphi' \in\Lambda_{2-2H}$.
%From Lemma \ref{lem:spec-dens-ratio}, $C=\lim_{x \to 0} f_H(x)/f^*_{H}(x)$. 
  By the definition of $g(0)$, $\varphi(0)=\lim_{x \to 0}\big(f_H(x)-f_H(x)/f^*_{H}(x)\,f^*_{H}(x)\big)=0$.

  The last claim follows by straightforward expansion of $
  f_H(x)-f_{H}^*(x)$ about $x=0$. Details are given in the appendix.
\end{proof} 

The additive closeness of the spectrum is a highly non-trivial result:
from the usual spectrum definition of LRD (Section \ref{ssec:lrdss}),
LRD with Hurst parameter $H$ implies only that the ratio between
$f_H/f_H^*$ is bounded at the origin whereas the difference
$f_H-f_H^*$ generally diverges. That the difference is not only a
bounded function but tends to zero, and is also differentiable,
emphasizes in yet another way how unusual fractionally differenced
processes are among LRD processes. To explore this in more detail, observe that 
the statement of Corollary~\ref{cor:spectral-closeness} can be written
\begin{displaymath}
         f_H + \varphi^- = f_H^* + \varphi^+
\end{displaymath}
where $\varphi^-\ge0$ and $\varphi^+\ge0$.
Both $\varphi^+$ and $\varphi^-$ define spectral densities with
$\varphi^+(0)=\varphi^-(0)=0$. We then (\cite{tstm}, Cor.~4.3.1)
obtain a probabilistic variant of the closeness result: a fractionally
differenced process is equal in the distributional sense to its
limiting fGn up to additive independent processes with spectra
$\varphi^+,\varphi^-$, both of which have the property of having a
vanishing covariance sum $S_\g=\sumj \g_j$.  Such processes (called
\textit{Constrained Short Range Dependent} (CSRD) in \cite{discrete}),
lie in the DoA of an fGn with Hurst parameter $H'\in[0,1/2)$. 
%In fact, Theorem~\ref{thm:conv-of-farima-vtf-2} implies that in this case
%the `error' processes are so special that they each correspond to the extreme
%case of $H'=0$. 
 In contrast, for Short Range Dependent (SRD)
processes (those in the DoA of a fGn with $H'=1/2$), $S_\g$ is finite
but positive.  A graph of a particular $\varphi$ and its first
derivative is shown in Figure~\ref{fig:phi}. The plot suggests that
$\varphi^- \equiv 0$; whereby FARIMA would be equal in distribution to fGn
plus an independent CSRD process.
\begin{figure}
  \centering
  \includegraphics[scale=\oneup,angle=-90]{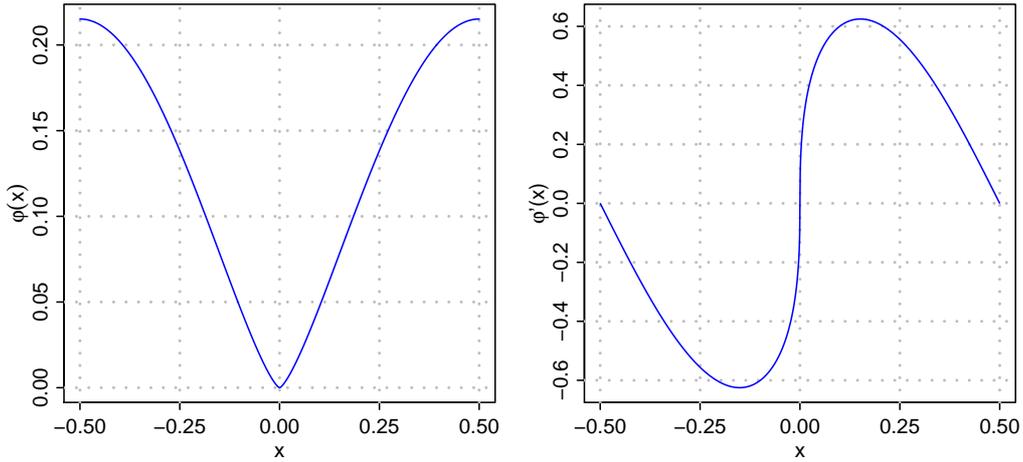}  
  \caption{\label{fig:phi} The function $\varphi(x)=f_H(x) - f^*_{H}(x)$ and its first derivative in the canonical case of a
    pure fractionally differenced process (FARIMA0d0) with $H=0.8$ and $\cffar=\cffgn$.}
\vspace{-3mm}
\end{figure}

To conclude our treatment of the ACVF, observe that a slightly weaker form of the closeness result of
Theorem~\ref{thm:conv-of-farima-vtf-2} can be derived from 
Theorem~\ref{prop:conv-of-farima}. 
Indeed, the identity $\omega_H(n)-\omega_{H}^*(n)=(\I d)_n$ implies
\begin{displaymath}
  |\omega_H(n)-\omega_{H}^*(n)|
  =  \Big|\sum_{k=0}^{n-1} \Big(\sumj d_j - \sum_{j=-k}^k d_j\Big) \Big| \leq 2 \sum_{k=0}^{n-1}\sum_{j=k+1}^\infty
  |d_j| \leq O(1) \sum_{k=0}^{n-1}k^{2H-3} = O(1),
\end{displaymath}
where we have used that $d_n=O(|n|^{2H-4})$ implies
$\sum_{j=k+1}^\infty |d_j|=O(1) \sum_{j=k+1}^\infty j^{2H-4}=O(k^{2H-3})$. 
The  $O(1)$ remainder term simply corresponds to a bounded function; this is clearly somewhat weaker than the
asymptotically constant remainder term appearing in Theorem~\ref{thm:conv-of-farima-vtf-2}.

%%% LP connection
We recently became aware of Lieberman \& Phillips (2008) \cite{LP2008} which provides an
asymptotic expansion for a class of fractionally differenced processes
corresponding to \eref{eq:farimaeq}, though $h(x)$ is required to be smooth rather than $C^3$.  
Using the first two terms of this expansion and comparing with an expansion for $\gstar(m)$, it is possible to
recover the $O(n^{2H-4})$ term of Theorem~\ref{prop:conv-of-farima}.
The work of \cite{LP2008} is focussed on numerical approximation
through infinite-order asymptotic expansions and does not compare
against fGn or draw conclusions on convergence speed or brittleness as we do here.

%%%%%%%%%%%%%%%%%%%%%%%%%%%%%%%%%%%%%%%%%%%%%%%%%
%%%%%%%%%%%%%%%%%%%%%%%%%%%%%%%%%%%%%%%%%%%%%%%%%
\section{Fractional Processes are Brittle}
\label{sec:brittle}

As pointed out at the end of Section~\ref{sec:VTF}, fractionally
differenced processes converge `almost immediately' to their fGn fixed
point compared to other processes in the domain of attraction, and
this is true in terms of each of the VTF, ACVF and spectrum.  In this
section we point out and illustrate a key consequence of this fact,
namely the \textit{brittleness} of fractionally differenced models.

%%%%%%%%%%%%%%%%%%%%%%%%%%%%%%%%%%%%%%%%%%%%%%%%%
\subsection{Brittleness}
\label{ssec:brittle}

Experimental data, especially data measured on a continuous scale, is very rarely
clean.  Imperfections in physical measurement are often treated
through the concept of observation noise, modelled as a random process
which perturbs the underlying observables.  A very common choice is
that of additive independent Gaussian noise, either white or coloured.
In the present context, this corresponds to adding to the original VTF
(or ACVF, or spectrum) the VTF (respectively ACVF, spectrum) of a
short range dependent noise process, that is a noise whose own fGn fixed point has $H'=1/2$.

As argued at the end of the previous section, we can essentially think of a
fractionally differenced process as an fGn to which a CSRD
process has been added.  Adding an SRD noise to this will change the
asymptotic behaviour, because the SRD asymptotics (with $S_\g>0$) is
`stronger' than CSRD asymptotics (with $S_\g=0$).  In terms of the
hierarchy within the DoA described by the index $\beta$ from
\eref{eq:RV}, whereas the original process lies very close to the
centre with $\beta=0$, the SRD-perturbed process will lie considerably
further out, with $\beta=1$.  A similar observation can be made if we
instead add a noise with LRD with $H'<H$ (resulting in
$\beta\in(1,2H)$), or even another CSRD process with $H'>0$ (resulting
in $\beta\in(0,1)$).
This last result follows from the fact that Theorem~\ref{thm:conv-of-farima-vtf-2} implies that
the `error' processes are so special that they are not only CSRD, but correspond to the extreme
case of $H'=0$, resulting in $\beta=0$.

Since the addition of even trace amounts of noise of diverse kinds will change the asymptotics, pushing the process further from its fGn limit and therefore slowing its convergence rate to it under aggregration, fractional differencing models are `brittle' or non-robust in this sense.  
Properties of systems driven by such processes may therefore differ qualitatively from properties of the same system once noise is added. 
The precise impact of the noise is beyond the scope of this paper (see the discussion).  It will depend on both the application and the class of noise and must be determined case by case.

%%%%%%%%%%%%%%%%%%%%%%%%%%%%%%%%%%%%%%%%%%%%%%%%%
\subsection{Numerical Illustrations}
\label{ssec:num}

In this section we illustrate the brittle nature of fractionally differenced processes through high accuracy numerical evaluation of the VTF of FARIMA time series, both with and without additive noise.

Three  different examples will be considered, two with SRD-noise and one with LRD-noise. 
More precisely,  the perturbed processes are $Z_i(t) = X_i(t)+\sqrt{0.1}Y_i(t)$ for $i=1,2,3$, where
\begin{enumerate}
\item[\textbf{1)}] $X_1$:  unit variance $\mathrm{FARIMA}(0,0.3,0)$; \\
                        \  $Y_1$:  unit variance Gaussian white noise,
\item[\textbf{2)}] $X_2$: unit variance $\mathrm{FARIMA}(1,0.3,1)$ with ARMA parameters $(\phi_1,\theta_1)=(0.3,0.7)$;\\  
                         \, $Y_2$: unit variance $\mathrm{ARMA}(1,1)$ process also with ARMA parameters $(\phi_1,\theta_1)=(0.3,0.7)$,
\item[\textbf{3)}] $X_3$: unit variance $\mathrm{FARIMA}(0,0.3,0)$;\\ 
                        \, $Y_3$: unit variance  $\mathrm{FARIMA}(0,0.2,0)$.
\end{enumerate}
\begin{figure}[!tb]
  \centering
  \includegraphics[scale=\oneup,angle=-90]{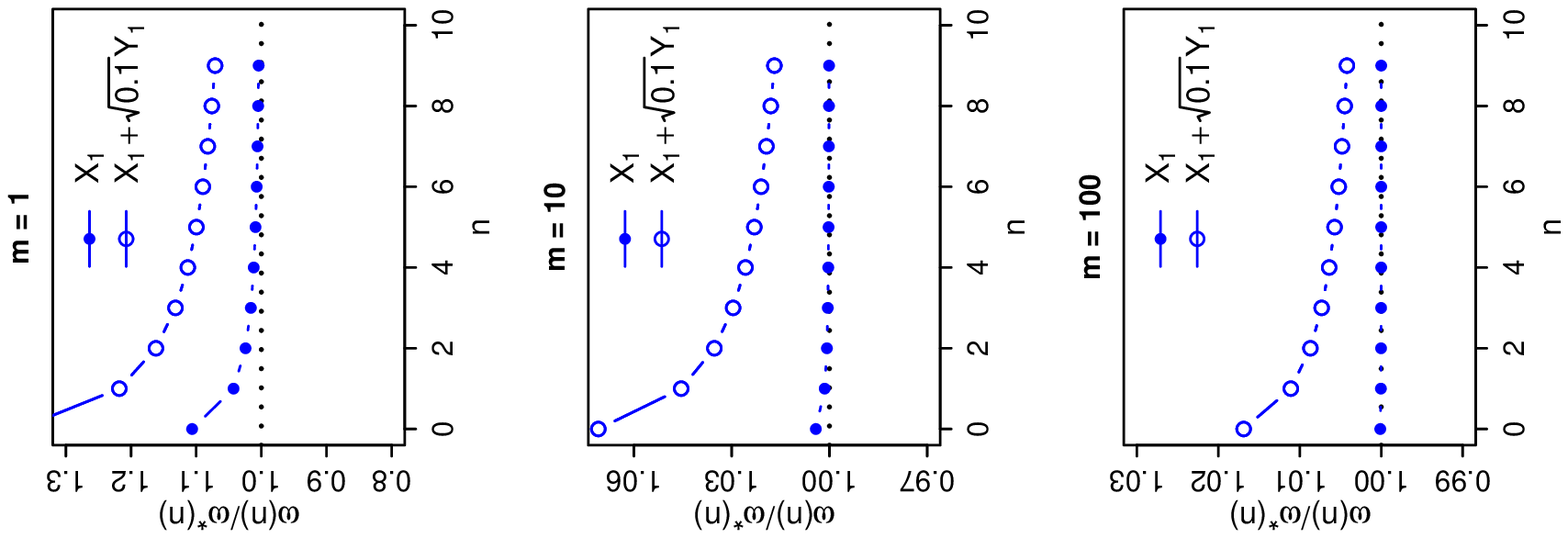}  
  \includegraphics[scale=\oneup,angle=-90]{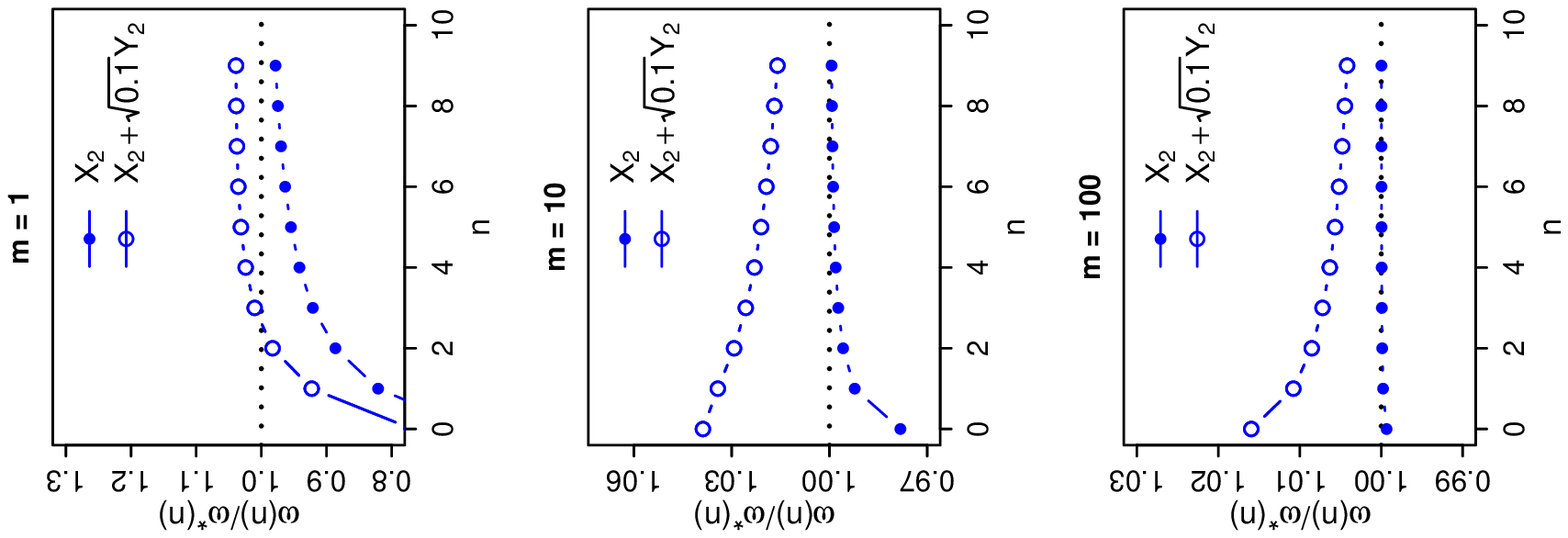}  
  \includegraphics[scale=\oneup,angle=-90]{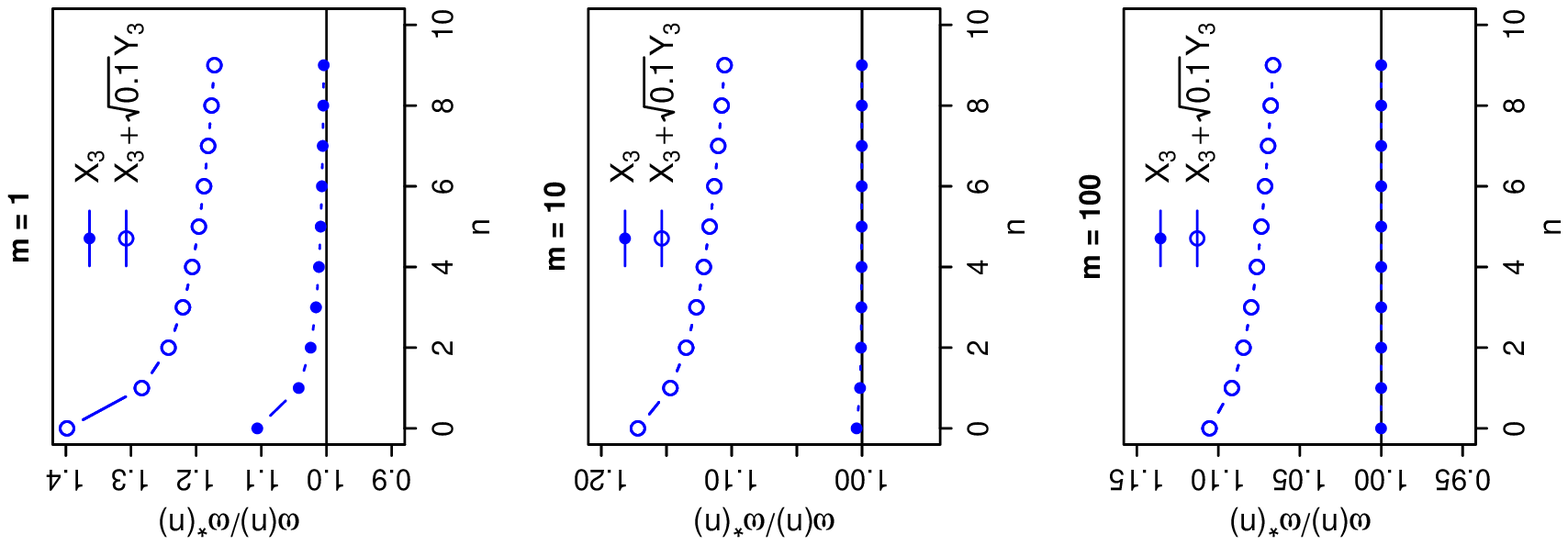}
 \caption{\label{fig:numerics} {Ratios of VTF's of original FARIMA and perturbed processes to their fGn limit, both originally and under aggregation, one column per example.
 The solid circles denote unperturbed FARIMA; the hollow circles the perturbed ones. 
 It is seen that the VTF's for unperturbed FARIMA converge much faster} than their perturbed counterparts.}
 \vspace{0mm}
\end{figure}

In each case, the original process $X_i$ and the perturbed process
$Z_i$ share a common fGn fixed point, but have unequal variances.  It
may seem unfair to compare results for processes with different
variances, however the opposite is true.  In fact, if the variances of
$Z_i$ and $X_i$ were chosen equal, this would mean that $\cf\ne\cf^*$,
and so their fGn limits would be different, rendering meaningful comparison
impossible.  To see this more directly, from the definitions in
Section~\ref{ssec:lrdss} it is clear that adding a perturbation
corresponding to a smaller $H$ value does not alter the fixed
point. On the other hand the variance must increase when an
independent noise is added.

For each example $i=1,2,3$, we calculate the VTF of $Z_i$ and $X_i$ and
normalise them by dividing by their common fGn limit $\w^*_H$.
Closeness to fGn can therefore be evaluated by looking to see how the
normalised VTF deviates from 1 for each lag.  Maple version 13 was
used to numerically evaluate the variance time functions to a high
degree of precision.

Figure~\ref{fig:numerics} displays the normalised VTFs for lags 1-10
for aggregation levels $m=1$, $10$, and $100$, with one example per
column. The graphs clearly demonstrate that even a small departure
from FARIMA takes the process much further away from its corresponding
fGn.  Indeed, after an aggregation of level 100, in each case the VTF
of the original process is visually indistinguishable from its fGn
limits compared to their perturbed versions.

Note that the second column in the figure gives an
example where before aggregation ($m=1$) the perturbed process was in
fact \textit{closer} to the fixed point over the first few lags, where
most of the obvious autocovariance lies. Under aggregation however,
this quickly reverses as the different asymptotic behaviours of the
original and perturbed processes manifest and become dominant at all lags.

%%%%%%%%%%%%%%%%%%%%%%%%%%%%%%%%%%%%%%%%%%%%%%%%%
%%%%%%%%%%%%%%%%%%%%%%%%%%%%%%%%%%%%%%%%%%%%%%%%%
\section{Discussion}
\label{sec:conc}

We have shown that fractionally differenced processes have an
asymptotic autocovariance structure which is extremely close to that
of the fractional Gaussian noise, more specifically, to that of the
fGn fixed point to which the given process will tend under aggregation based
renormalisation.  We have shown this independently for each of three
equivalent views of the autocovariance structure, namely behaviour of
the spectral density at the origin, and each of the autocovariance
function and the variance time function in the large lag limit.

We showed that the natural class of processes against which this
behaviour should be compared are those in the domain of attraction of
the fGn fixed point limit.  Using regular variation to provide a
measure of distance from this fixed point within the DoA, we were able
to precisely quantify the nature of this `closeness', and to confirm
that the fractionally differenced class are indeed exceptionally
unusual in this regard,  resulting in very fast convergence to fGn
  under renomalisation.  We then used this fact to point out that the
fractionally differenced process class is brittle, that is, non robust
to the presence of noise.  In particular we showed that the addition
of arbitrarily small amounts of independent noise,  not only
  Gaussian white noise but also noises which are much gentler in a
  precise sense, changes the asymptotic covariance structure
qualitatively.  This fact has not been appreciated in the literature
where such models, for example the FARIMA class, are widely used in
time series modelling, synthetic data generation, and to drive more
complex stochastic systems such as queuing systems, without regard to
robustness with respect to the model in this sense.
%   {\r Explicit claims of the generality of such models are not common, but equally,
%   results derived using them are presented with no regard to the atypical properties we describe here, and hence with no indication as to their robustness with respect to the model in this sense.}

The assessment of the impact of the brittleness of fractionally
differenced models is beyond the scope of this work, as it will depend
intimately on each particular application as well as the nature of the
noise in question.  However, we argue that conclusions based on the
perception that FARIMA and related models represent `typical' LRD
behaviour need to be reassessed, in particular in contexts where noise
is important to consider.  To give an example of a possible impact in
the noiseless case, we conclude by expanding upon the comments given
in the introduction on statistical estimation.
    
The closeness of a process to its fGn fixed point in functional terms
is directly related to the speed of convergence of that process to the
fixed point under aggregation.  One application where this fact
carries direct implications is the performance of statistical
estimators for the Hurst parameter $H$.  Fundamentally,
semi-parametric estimators of scaling parameters such as $H$ are based
on underlying estimates made at a set of `aggregations' at different
levels, that is at multiple scales
\cite{Robinson94,Beran,A2,Comparestimate95}.  The sophistication of
particular estimators notwithstanding, this is true regardless of
whether they are based in the spectral, time, or wavelet domains,
though the technical details vary considerably.  In the time domain
using time domain aggregation the link is of course direct, and
reduces to looking at the asymptotically power-law nature of
$\vxm=\m\w(0)$ as a function of $m$ in some form.  This is precisely
where fractionally differenced processes are at a real advantage, as
this quantity converges extremely quickly to that of the fGn fixed
point, whose ideal power-law behaviour $\vxm=\vx m^{2H}$ allows $H$ to
be easily recovered.  As a result, estimator performance evaluated
through the use of fractionally differenced models would be superior
to that for LRD processes more generally.
  %potentially leading to exaggerated claims as to their efficacy.  
Note that we are not recommending that $H$ estimation be performed
directly in the time domain by regressing $\m{\hat{\vx}}$ on $m$,
indeed we have argued the opposite \cite{A2}.  Our point is that the
extreme closeness of such models to fGn must ultimately manifest in
simpler asymptotic behaviour which will, in general, translate to
improved estimation.  Indeed, in the spectral domain, the importance
of the degree of smoothness at the origin for the ultimate limits on
estimator performance has already been noted \cite{GRS1997}.  Note
that the above observations in no way put into question the findings
of prior work on estimation of fractional processes in noise.

%%%%%%%%%%%%%%%%%%%%%%%%%%%%%%%%%%%%%%%%%%%%%%%%%
%%%%%%%%%%%%%%%%%%%%%%%%%%%%%%%%%%%%%%%%%%%%%%%%%
\section{Appendix}
\label{sec:app}

The appendix is split according to results relating to spectral closeness (Section~\ref{ssec:closespec}), 
closeness of the VTF (Section~\ref{ssec:closeVTF}), and of the ACVF (Section~\ref{sec:ACVF}).
For convenience, the statement of results proved here are generally repeated.  
Lemmas A1 and A2 are labelled separately as they appear in the appendix only.

%%%%%%%%%%%%%%%%%%% Spectrum %%%%%%%%%%%%%%%%%%%%%%%%%%
\setcounter{lemma}{0}
%\begin{lemma}
%%\label{lem:power-inequality}
%  Assume that $x,y \geq 0$ and $0<a<1$. Then $|x^a-y^a| \leq |x-y|^a$.
%\vspace{-3mm}
%\end{lemma}
%\begin{proof}
%  If $x=y=0$ the result holds. Assume without loss of generality $x>y$. Then $|x^a-y^a| \leq
%  |x-y|^a$ is equivalent to the statement $1-t^a \leq (1-t)^a$ for $t
%  \in [0,1]$, the truth of which follows by noting that the
%  mapping $t \mapsto (1-t)^a+t^a$ assumes the value 1 at $t=0$ and
%  $t=1$, and is increasing on $(0,1/2)$ and decreasing on $(1/2,1)$.
%\end{proof}

%%%%%%%%%%%%%%%%%%%%%%%%%%%%%%%%%%%%%%%%%%%%%%%%%
%%%%%%%%%%%%%%%%%%%%%%%%%%%%%%%%%%%%%%%%%%%%%%%%%
\subsection{Spectrum}
\label{ssec:Aspec}

\noindent\textbf{Details of the proof of Theorem~\ref{lem:spec-dens-ratio}}

% Unless otherwise specified, we consider the domain $x\in [-1/2,1/2]$.
\textbf{(i)} \ It is well known, and can be verified by examining
(\ref{eqn:ffgn}) and (\ref{eqn:ffar}), that each of $f_H^*(x)$ and
$f_H(x)$ diverge to infinity at $x=0$ but are otherwise even, positive
and continuous.  Since $g(0)>0$ is finite, $g$ is positive and
continuous on the compact domain and hence bounded, and even.  Since
$g$ is continuous, it is integrable (\cite{Champeney90}, p.9), and
since, for $p>0$, $g^p$ is likewise continuous, positive and bounded,
$g$ is in $L^p$.

\textbf{(ii)} \  Since $a(x)>0$, $\tilde g$ is bounded away from zero.
Each of $a$, $b$, and $c$ are smooth.  The latter follows from the fact that for each $j\ne0$ the term $|\pi j +  \pi x|^{-(2H+1)}$ is infinitely differentiable in $x\in[-1/2,1/2]$. 
By comparing against $\sum_{j=1}^\infty (j-1/2)^{-(2H+1)}<\infty$ the Weierstrass' $M$-test shows 
that the defining sum for $c$, and the sum of the term by term first derivatives, each converge uniformly.
A classical result on the differentiability of infinite series (\cite{Apostol74}, Thm.~9.14) then shows 
that $c'$ is given by the latter sum.  Using exactly the same $M$-test, this can be 
repeated for derivatives of all orders, proving that $c$ is smooth.

Since $a$ is smooth and bounded above zero, $|a|^{2H+1}$ is smooth over $[-1/2,1/2]$, 
and the same is true for  $|b|^{2H+1}$ away from the origin. 
It follows that $\tilde{g}$ is smooth everywhere except at the origin where its smoothness is
controlled by that of $|b|^{2H+1}$.

%\textbf{(iii)} \

%A more specific proof that $|b(x)|^\beta \in\Lambda_{\beta}$ follows from 
%Lemma \ref{lem:power-inequality} and the mean value theorem, since
%\begin{equation}
%      \big||k(x)|^{\beta}-|k(y)|^{\beta}\big|  \leq
%      \big||k(x)|-|k(y)|\big|^{\beta} \leq  |k(x)-k(y)|^{\beta} \leq |x-y|^{\beta}\|k'\|_\infty^{\beta}
%\end{equation}
%for any smooth function $k$, where the supremum $\|k'\|_\infty$ of $k'$ over the domain is finite.

%\textbf{(iv)} \    

%%%%%%%%%%%%%%%%%%% VTF %%%%%%%%%%%%%%%%%%%%%%%%%%
%%%%%%%%%%%%%%%%%%%%%%%%%%%%%%%%%%%%%%%%%%%%%%%%%
%%%%%%%%%%%%%%%%%%%%%%%%%%%%%%%%%%%%%%%%%%%%%%%%%
\subsection{VTF}
\label{ssec:Avtf}

\begin{lemma}
    %\label{lem:conv_exists}
    The auto-covariance functions $\gamma_H$ and $\gstar$  are related through the convolution $\gamma_H= G\star \gamma_{H}^*$.
\end{lemma}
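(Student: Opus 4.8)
The plan is to use the fact that the ACVF is the sequence of Fourier coefficients of the spectral density, and to transport the spectral identity $f_H(x)=f^*_H(x)g(x)$ of \eref{eq:firsttimeg} into the time domain. Writing the recovery formula $\gamma(n)=\inth f(x)\,e^{2\pi inx}\,dx$ for both processes, I would begin from
\[
   \gamma_H(n)=\inth f_H(x)\,e^{2\pi inx}\,dx = \inth f^*_H(x)\,g(x)\,e^{2\pi inx}\,dx,
\]
noting that the analogous formula $\gstar(n)=\inth f^*_H(x)\,e^{2\pi inx}\,dx$ identifies $\gstar$ as the Fourier coefficients of the fixed-point density $f_H^*$.

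Next, I would substitute the Fourier series $g(x)=\sumk G_k\,e^{2\pi ikx}$, which by Theorem~\ref{lem:spec-dens-ratio}(iv) converges absolutely, and hence uniformly, to the continuous function $g$. Interchanging summation and integration gives
\[
   \gamma_H(n)=\sumk G_k \inth f^*_H(x)\,e^{2\pi i(n+k)x}\,dx = \sumk G_k\,\gstar(n+k).
\]
Since both $G$ and $\gstar$ are even, the substitution $j=-k$ together with $G_{-j}=G_j$ rewrites the right-hand side as $\sumj G_j\,\gstar(n-j)=(G\star\gstar)(n)$, which is the claimed identity.

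The only delicate point, and the main (if modest) obstacle, is justifying the interchange of sum and integral, since $f^*_H$ is unbounded at the origin. I would handle this by dominated convergence applied to the partial sums $g_N(x):=\sum_{|k|\le N}G_k\,e^{2\pi ikx}$: these are uniformly bounded by $M:=\sumk|G_k|<\infty$ (finite by Theorem~\ref{lem:spec-dens-ratio}(iv)) and converge uniformly to $g$, so each integrand $f^*_H\,g_N\,e^{2\pi inx}$ is dominated by the integrable function $M f^*_H$, integrability holding because $\inth f^*_H=\gstar(0)=\vx<\infty$ in spite of the singularity at $x=0$. Passing to the limit in $\sum_{|k|\le N}G_k\,\gstar(n+k)$ then yields the identity. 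The same estimate $\sumj|G_j|\,|\gstar(n-j)|\le\gstar(0)\sumj|G_j|<\infty$, using that $\gstar$ is a bounded sequence, simultaneously shows that the convolution $G\star\gstar$ exists and in fact converges absolutely.
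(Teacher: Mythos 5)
Your proof is correct and follows essentially the same route as the paper: transport the spectral identity $f_H = f_H^* g$ via the absolutely convergent Fourier series of $g$, interchange sum and integral, and use evenness to obtain the convolution. The only cosmetic difference is that you justify the interchange by dominated convergence on the partial sums, whereas the paper invokes Fubini--Tonelli on the iterated sum-integral; both rest on the same two facts, $\sumj |G_j|<\infty$ and the integrability of $f_H^*$.
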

\begin{proof}
\vspace{-2mm}
%We first confirm that $\gamma_H= G\star \gamma_{H}^*$. 
The r.h.s.~exists since   $\sumj G_j \gamma_N^*(n-j)\le
  \sumj |G_j| |\gamma_N^*(n-j)|\le \gamma_N^*(0) \sumj |G_j|<\infty$ from Theorem~\ref{lem:spec-dens-ratio}.  
For the l.h.s.~we can write
 \begin{eqnarray}
   \gamma_H(n) &=& \inth f_{H}(x)       \,e^{2\pi ix n} \mathrm{d}x 
                =  \inth g(x)f_{H}^*(x) \,e^{2\pi ix n} \mathrm{d}x \\
               &=& \inth \bl\sumj G_je^{-2\pi ix j}  \br  f_{H}^*(x) \,e^{2\pi ix n} \,\mathrm{d}x 
   \label{eqn:iteratedint}
 \end{eqnarray}
 as the Fourier series for $g(x)$ converges absolutely for all $x$ since
 $\sumj|G_j|<\infty$ (Theorem~\ref{lem:spec-dens-ratio}). 
Now
\begin{eqnarray*}
  \inth \bl\sumj |G_je^{-2\pi ix j}|  \br  f_{H}^*(x) \,e^{2\pi ix n} \,\mathrm{d}x &=&
            \sumj |G_j| \inth  f_{H}^*(x) \,e^{2\pi ix n} \,\mathrm{d}x  \\
            &=&\gamma_{H}^*(n)  \sumj |G_j| \  \le \infty \, .
\end{eqnarray*}
This justifies the use of Fubini's Theorem (\cite{SJTaylor}, Th.6.5)
on the iterated integral (\ref{eqn:iteratedint}) to reverse the order of
integration and summation. Using the evenness of $G$ and $f_H^*$, this yields
\vspace{-1mm}
 \begin{eqnarray*}
    \gamma_H(n)           
  %   &=& \inth \sumj G_j f_{H}^*(x) \cos(2\pi xj) \cos(2\pi xn) \,\mathrm{d}x \\
      &=& \sumj G_j \inth f_{H}^*(x) \cos(2\pi xj) \cos(2\pi xn) \,\mathrm{d}x \\
      &=& \sumj G_j \inth f_{H}^*(x) \frac{1}{2}\bl\cos(2\pi x(j-n)) + \cos(2\pi x(j+n)\br \,\mathrm{d}x \\
      &=& \frac{1}{2}\sumj G_j \big(\gamma^*_{H}(j-n)+\gamma^*_{H}(j+n)\big) =
          \frac{1}{2}\bl \sumj G_j \gamma^*_{H}(n-j) +  \sumj G_j \gamma^*_{H}(-j-n) \br \\
      &=& \frac{1}{2}\bl (G\star\gamma^*_H)(n) +  (G\star\gamma^*_H)(-n) \br = (G\star\gamma^*_H)(n),
 \end{eqnarray*}
using the evenness of $\gamma_H^*$ and $G\star \gamma_{H}^*$, and the
existence of $G\star\gamma_{H}^*$ to justify the splitting of the sum.
\end{proof}

\bigskip
\begin{lemma}
%\label{lem:main-lem-2}
Assume $1 <\alpha < 2$ and let $a=\{|n|^{\alpha}\,:\,n \in \Z \}$. 
Let $b$ be a symmetric sequence satisfying $\sum_{j=1}^\infty j^\alpha |b_j|<\infty$. 
Then $S_b=\sum_{j=-\infty}^\infty b_j$ and the symmetric sequence $c=a \star b$
exist, and $(c_n - S_b a_n)\ngoinf 0$.
\end{lemma}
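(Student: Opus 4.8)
The plan is to dispatch the three assertions in turn, reducing everything to the size of the centred second difference of $x \mapsto |x|^\alpha$. For existence, note that $\alpha>0$ gives $|b_j|\le j^\alpha|b_j|$ for $j\ge1$, so the hypothesis $\sum_{j=1}^\infty j^\alpha|b_j|<\infty$ already forces $\sum_j|b_j|<\infty$ and hence the absolute convergence of $S_b$. Reindexing the convolution as $c_n=\sum_k|n-k|^\alpha b_k$, I would bound $|n-k|^\alpha\le 2^\alpha(|n|^\alpha+|k|^\alpha)$ and combine $\sum_k|b_k|<\infty$ with $\sum_k|k|^\alpha|b_k|<\infty$ to obtain absolute convergence of $c_n$ for each fixed $n$; symmetry of $c$ is then immediate from that of $a$ and $b$ via the substitution $k\mapsto-k$.

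For the convergence $c_n-S_ba_n\to0$, the first step is to write $c_n-S_ba_n=\sum_k(|n-k|^\alpha-|n|^\alpha)b_k$ and to use the symmetry of $b$ to pair the indices $k$ and $-k$. The $k=0$ term vanishes, and the pairing collapses the expression to
\[
   c_n-S_ba_n=\sum_{k=1}^\infty\psi(n,k)\,b_k,\qquad \psi(n,k):=(n+k)^\alpha+|n-k|^\alpha-2n^\alpha,
\]
a centred second difference. For fixed $k$ and $n\ge k$ one has $\psi(n,k)=n^\alpha\phi(k/n)$ with $\phi(u):=(1+u)^\alpha+(1-u)^\alpha-2$, and a Taylor expansion gives $\psi(n,k)\sim\alpha(\alpha-1)k^2n^{\alpha-2}$, which tends to $0$ because $\alpha<2$. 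The intended route is then to justify exchanging the limit with the summation by dominated convergence.

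The main obstacle is manufacturing a summable dominating bound, because the naive estimate $\psi(n,k)=O(k^2 n^{\alpha-2})$ carries a factor $k^2$ that the hypothesis $\sum_k k^\alpha|b_k|<\infty$ (with $\alpha<2$) cannot absorb. The resolution is to exploit the homogeneity of $|x|^\alpha$: since $\phi(u)/u^\alpha$ is continuous and bounded on $(0,1]$---vanishing like $u^{2-\alpha}$ at the origin and equal to $2^\alpha-2$ at $u=1$---one gets $\psi(n,k)=n^\alpha\phi(k/n)\le C\,k^\alpha$ uniformly over $n\ge k$, while the range $n<k$ yields the same order by the triangle inequality. Equivalently, and more cleanly, the concavity of $x\mapsto x^{\alpha-1}$ shows $\partial_n\psi(n,k)\le0$, so $\psi(\cdot,k)$ decreases on $[k,\infty)$ to its limit $0$ and is therefore squeezed into $0\le\psi(n,k)\le\psi(k,k)=(2^\alpha-2)k^\alpha$; this monotonicity is exactly what Lemma~A1 records. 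Either way $|\psi(n,k)b_k|\le C k^\alpha|b_k|$ with $\sum_k k^\alpha|b_k|<\infty$, so dominated convergence yields $\sum_{k=1}^\infty\psi(n,k)b_k\to0$, completing the argument.
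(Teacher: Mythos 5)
Your proof is correct and follows essentially the same route as the paper's: the same reduction to the centred second difference $T_n^j=|n-j|^\alpha+(n+j)^\alpha-2n^\alpha$, the same monotonicity/domination bound $0\le T_n^j\le Ck^\alpha$ (which is precisely the content of the paper's Lemma~A1), and the same interchange of limit and sum, which you package as dominated convergence where the paper writes out the explicit $\varepsilon$--$N$ splitting. The homogeneity argument $\psi(n,k)=n^\alpha\phi(k/n)$ is a pleasant alternative derivation of the dominating bound, but it does not change the substance of the argument.
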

\begin{proof} Since $\alpha>1$, $\sum_{j=-\infty}^\infty |b_j| \le
|b_0|+2\sum_{j=1}^\infty j^\alpha|b_j|<\infty$, so $b$  is absolutely summable and hence summable.
Now consider $c$. Clearly $c_0=\sum_{j=-\infty}^\infty |-j|^\alpha b_{j}$
exists by the assumptions on $b$, and for $n>0$
\begin{eqnarray*}
  \label{eq:ex-of-conv}
  |c_n| = |(a \star b)_n|
    &\leq&  \sum_{j=-\infty}^{-n}|n-j|^\alpha |b_j| + \sum_{j=-n+1}^{n-1}|n-j|^\alpha|b_j|
                                                    + \sum_{j=n}^{\infty}|n-j|^\alpha|b_j|\\
    &\leq&  \sum_{j=n}^{\infty}(2j)^\alpha |b_j|    + \sum_{j=-n+1}^{n-1}|n-j|^\alpha|b_j|
                                                    + \sum_{j=n}^{\infty}j^\alpha|b_j|<\infty .
\end{eqnarray*}
Since both $a$ and $b$ are symmetric, $c_n$ also exists for $n<0$, and so $c$ exists and is symmetric.\\
For the last part, since $c_n - S_b a_n$ is symmetric in $n$ we assume $n\ge0$ and rewrite it as
\begin{displaymath}
   \sum_{j=-\infty}^\infty |n-j|^\alpha b_{j}  - n^\alpha \sum_{j=-\infty}^\infty b_j 
        = n^\alpha b_0 + \sum_{j=1}^\infty (n+j)^\alpha b_{j} + 
          \sum_{j=1}^\infty |n-j|^\alpha b_{j} - n^\alpha \sum_{j=-\infty}^\infty b_j
        = \sum_{j=1}^\infty T_{n}^j b_j
\end{displaymath}
where $T_n^j:=|n-j|^\alpha+(n+j)^\alpha-2n^\alpha$, $n\ge0$, $j>0$. 
Noticing that $T_n^j= f_\alpha(n,j)$ from 
%Lemma~\ref{lem:dom-fun-2}
Lemma~A1, we
have that $T_n^j < T_0^j=2j^\alpha$ for each fixed $j$, and so
\begin{displaymath}
  |c_n - S_b a_n| \leq \sum_{j=1}^{N}  |T_n^j||b_j| +    \sum_{j=N+1}^\infty  |T_n^j||b_j|  
                  <    \sum_{j=1}^{N}  |T_n^j||b_j| + 2\!\sum_{j=N+1}^\infty j^\alpha|b_j| .
\end{displaymath}

Now given any $\varepsilon>0$, a $N(\varepsilon)>1$ can be found such that
$\sum_{j=N+1}^\infty j^\alpha|b_j|<\varepsilon/4$.  
Next, since $T_n^j \ngoinf 0$ for any fixed $j$ 
%(Lemma~A\ref{lem:dom-fun-2} below), 
(Lemma~A1 below),
there exists an $n_0(N)$ such that
$\sum_{j=1}^{N} |T_n^j||b_j|<\varepsilon/2$ when $n \geq n_0$. 
It follows that $|c_n - S_b a_n|<\varepsilon$ for $n \geq n_0$ and so
 $(c_n - S_b a_n)\ngoinf 0$.
\end{proof}

\medskip
{\noindent\textbf{Lemma A1}\textit{
%\begin{lemma}
\label{lem:dom-fun-2}
Assume  $1< \alpha< 2$ and define
$f_\alpha(x,y):=|x-y|^{\alpha}+(x+y)^\alpha-2x^\alpha$ for  $x\ge0$,  $y>0$. 
For each $y$, $f_\alpha(\,\cdot\,,y)$ is positive, strictly decreasing, and
$\lim_{x \to \infty} f_\alpha(x,y)=0$.
\vspace{-3mm}
%\end{lemma}
}
}
\begin{proof}
\quad  Fix $y>0$.  We split the domain of $f_\alpha(\,\cdot\,,y)$ into two cases.\\
Let $x\geq y$.  It follows that $f_\alpha'(\,\cdot\,,y)=\alpha f_{\alpha-1}(\,\cdot\,,y)$. 
Define $g(x) = x^{\alpha}$. Since $g'(x)=\alpha x^{\alpha-1}$ is strictly concave,  
$(x-y)^{\alpha-1}+(x+y)^{\alpha-1} < 2x^{\alpha-1}$ and so
$f_\alpha'(\,\cdot\,,y) <0$ and $f_\alpha(\,\cdot\,,y)$ is strictly decreasing.  
To prove  $\lim_{x \to \infty} f_\alpha(x,y)=0$, we apply the mean
value theorem twice to $g$, and then once to $g'$, to obtain:
\begin{eqnarray}
  f_\alpha(x,y) &=&  \big((x+y)^\alpha-x^\alpha\big)  -  \big(x^\alpha-(x-y)^\alpha\big)\\
                &<& \alpha y\big((x+y)^{\alpha-1}-(x-y)^{\alpha-1}\big) \\
                &<& 2\alpha(\alpha-1)y^2(x-y)^{\alpha-2}
\end{eqnarray}
(since $g'$ is strictly increasing and $g''$ strictly decreasing), which tends to zero
as $x\to\infty$.\\
\smallskip
Let $x<y$.  In this case, the derivative with respect to $x$ yields
\ba
 f'_\alpha(x,y) &=& \alpha\bigl( (x+y)^{\alpha-1}-(y-x)^{\alpha-1}- 2x^{\alpha-1} \bigr)\\
                &<& \alpha\bigl( (x+y)^{\alpha-1}-(y-x)^{\alpha-1}- (2x)^{\alpha-1}\bigr) \\
                &=& \alpha\bigl( h_x(y) - h_x(x)\bigr)
\ea
where $h_x(y)= (x+y)^{\alpha-1}-(y-x)^{\alpha-1}$. Since the
derivative of $h_x$ is negative for $x>0$, $h_x$ is strictly decreasing. 
It follows that  $f'_\alpha(\cdot,y)<0$ and so  $f_\alpha(\,\cdot\,,y)$ is likewise strictly decreasing.\\
Finally, since $f_\alpha(x,y)$ is decreasing for all $x\ge0$ and tends to zero, it is positive.
\end{proof}

%%%%%%%%%%%%%%%
\bigskip
\begin{lemma}
%\label{lem:doub-int-of-conv}
  Let $a,b$ be symmetric sequences and assume that $c:=a \star b$
  exists.  Then $\I c$ exists, and if $(\I a) \star b$ exists, then $\I c =
  (\I a) \star b - \left((\I a) \star b\right)_{0}$ .
\vspace{-2mm}
\end{lemma}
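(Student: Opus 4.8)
The plan is to base everything on the explicit ``triangle'' representation of the double-summation operator $\I$, reducing the claimed formula to an elementary identity between triangle functions, with every rearrangement of infinite series justified by the hypothesis that $(\I a)\star b$ exists. First, counting how often each term is repeated in the double sum shows that $\I$ has the closed form $(\I c)(n)=\sum_{|i|<n}(n-|i|)c_i$ for $n\ge1$: a term $c_i$ with $|i|<n$ is counted once for each $k$ satisfying $|i|\le k\le n-1$. Since this is a finite sum of the finitely many quantities $c_i$ ($|i|<n$), each of which exists by hypothesis, $\I c$ exists immediately; this disposes of the first claim and reduces the lemma to an identity between values.

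Next I would substitute $c_i=\sum_j a_{i-j}b_j$ into this representation. Because the outer sum over $i$ is finite while each inner series converges to $c_i$, the finite sum and the convergent series may be interchanged with no further justification, yielding $(\I c)(n)=\sum_j b_j W_n(j)$ where, writing $(x)^+:=\max(x,0)$,
\[
  W_n(j):=\sum_{|i|<n}(n-|i|)a_{i-j}=\sum_\ell (n-|\ell+j|)^+ a_\ell
\]
is a ``shifted triangle'' sum of $a$. The computational heart is then the pointwise identity $W_n(j)=\tfrac12\big[(\I a)(n-j)+(\I a)(n+j)\big]-(\I a)(j)$, which I would obtain by first symmetrising $W_n(j)=\sum_\ell \tfrac12\big[(n-|j+\ell|)^+ +(n-|j-\ell|)^+\big]a_\ell$ using $a_\ell=a_{-\ell}$, and then applying the purely piecewise-linear identity $\tfrac12[(n-|j+\ell|)^+ +(n-|j-\ell|)^+]=\tfrac12[(|n-j|-|\ell|)^+ +(|n+j|-|\ell|)^+]-(|j|-|\ell|)^+$, valid for each $\ell$ and checked at the finitely many break-points of the absolute values. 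Summing against $a_\ell$ and recognising $(\I a)(m)=\sum_\ell(|m|-|\ell|)^+ a_\ell$ then gives the stated form of $W_n(j)$.

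Finally I would sum this identity against $b_j$. Using $b_j=b_{-j}$ and reindexing $j\mapsto-j$ gives $\sum_j b_j(\I a)(n+j)=\sum_j b_j(\I a)(n-j)=\big((\I a)\star b\big)(n)$, while $\sum_j b_j(\I a)(j)=\sum_j b_j(\I a)(-j)=\big((\I a)\star b\big)(0)$ since $\I a$ is even. This produces exactly $(\I c)(n)=\big((\I a)\star b\big)(n)-\big((\I a)\star b\big)(0)$. The one place where care is genuinely required—and where the hypothesis is invoked repeatedly—is in splitting the single convergent sum $\sum_j b_j W_n(j)$ into these three separate series: each converges precisely because $(\I a)\star b$ exists (the first two after the reindexing above, the third as its value at the origin), so distributing the sum is legitimate. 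I expect this convergence/splitting bookkeeping, rather than the triangle identity itself, to be the main obstacle to a fully rigorous write-up.
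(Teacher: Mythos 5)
Your proof is correct, but it takes a genuinely different (in fact, mirror-image) route to the paper's. The paper factors the triple sum the other way round, as $(\I c)_n=\sum_j a_j H_n(j)$ with $H_n(j)=\sum_{k=0}^{n-1}\sum_{i=-k}^k b_{i-j}$, and then performs a double summation by parts: using $(\I a)_{j-1}-2(\I a)_j+(\I a)_{j+1}=2a_j$ it transfers the weight from $a$ onto $\I a$, at the cost of replacing $H_n(j)$ by its second difference, which telescopes to $b_{n-j}+b_{-n-j}-2b_{-j}$; the result then drops out by symmetry. You instead put $b$ outside, writing $(\I c)_n=\sum_j b_j W_n(j)$, and evaluate the finite sum $W_n(j)=\sum_\ell (n-|\ell+j|)^+a_\ell$ in closed form as $\tfrac12\bigl[(\I a)(n-j)+(\I a)(n+j)\bigr]-(\I a)(j)$ via a piecewise-linear triangle identity (which is easily verified, e.g.\ by matching the second differences $\delta_{c-h}-2\delta_c+\delta_{c+h}$ of the triangles on both sides). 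The trade-off: your route confines every delicate infinite-sum manipulation to a single splitting at the very end, since the triangle identity lives entirely among finite sums, whereas the paper must justify its Abel-summation rearrangement mid-proof by the same appeal to the existence of $(\I a)\star b$ at shifted arguments; conversely, the paper's second-difference computation telescopes with no case analysis, while your triangle identity requires checking break-points. One small caveat, which you share with the paper and need not worry about further: both arguments implicitly use that the existence of $((\I a)\star b)_m=\sum_j(\I a)_j b_{m-j}$ is stable under the reindexings $j\mapsto n-j$ and $j\mapsto -j$, which is harmless under any reasonable reading of ``the convolution exists'' but is never spelled out in either write-up.
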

\begin{proof}
Since $(\I c)_n$ is a finite sum of elements of $c$, it exists for each $n$.
 Now 
 \begin{displaymath}
   (\I c)_n= \sum_{k=0}^{n-1} \sum_{i=-k}^k\sum_{j=-\infty}^\infty a_j b_{i-j}
 \end{displaymath} 
 can be rewritten as $(\I c)_n = \sum_{j=-\infty}^\infty a_j H_n(j)$ where
$H_n(j):=\sum_{k=0}^{n-1} \sum_{i=-k}^k b_{i-j}$, since a finite sum of convergent series is convergent.
Since $(\I a)_{j-1} - 2(\I a)_j + (\I a)_{j+1}= a_{-j} + a_j = 2a_j$, we have
  \begin{eqnarray}
  \hspace{-6mm}  (\I c)_n = \sum_{j=-\infty}^\infty a_j H_n(j)
    &=& \frac{1}{2}\sum_{j=-\infty}^\infty\big((\I a)_{j-1} -
                           2(\I a)_j+(\I a)_{j+1}\big)H_n(j) \\
    &=& \frac{1}{2}\!\left(\sum_{j=-\infty}^\infty\!\!(\I a)_{j-1} H_n(j) -
                        2\!\!\!\sum_{j=-\infty}^\infty(\I a)_j \, H_n(j) + 
                         \!\!\!\sum_{j=-\infty}^\infty\!(\I a)_{j+1} H_n(j)\!\right)
  \label{eq:int-dec}\\
    &=&\frac{1}{2}\sum_{j=-\infty}^\infty  (\I a)_j\big(H_n(j+1)-2H_n(j)+H_n(j-1)\big),
  \label{eq:int-decc}
\end{eqnarray}
Step \eqref{eq:int-dec} is justified since each of the sums is
convergent, because each can be written as a finite sum of series of the form
$\sum_{j=-\infty}^\infty (\I a)_j\, b_{m-j}$ for some $m$, and this is
just $((\I a) \star b)_m$ which exists by assumption.
Now
\begin{eqnarray}
  && \hspace{-7mm} H_n(j+1)-2H_n(j)+H_n(j-1) = \big(H_n(j-1)-H_n(j)\big)-\big(H_n(j)-H_n(j+1)\big)\\
  &=& \sum_{k=0}^{n-1}\left(\Big(\sum_{i=-k-j+1}^{k-j+1}  b_i-\sum_{i=-k-j}^{k-j} b_i\Big) 
                          -\Big(\sum_{i=-k-j}^{k-j}     b_i-\sum_{i=-k-j-1}^{k-j-1} b_i\Big)\right)\\
  &=& \sum_{k=0}^{n-1}\Bigl(\big(b_{k-j+1}-b_{-k-j}\big) - \big(b_{k-j}-b_{-k-j-1}\big)\Bigr)\\
  &=& \sum_{k=0}^{n-1}\big(b_{k-j+1}-b_{k-j})-\sum_{k=0}^{n-1}\big(b_{-k-j}-b_{-k-j-1}\big)\\
  &=&  (b_{n-j} -b_{-j}) - (b_{-j} - b_{-n-j})  =  b_{n-j} + b_{-n-j} - 2b_{-j}.
\end{eqnarray}
The result then follows by substitution into \eqref{eq:int-decc}, using the existence of
$(\I a) \star b$ to justify splitting the sum, and finally by the symmetry of $(\I a)$ and $b$.
\end{proof}

%%%%%%%%%%%%%%%%%%% ACVF %%%%%%%%%%%%%%%%%%%%%%%%%%%%
%%%%%%%%%%%%%%%%%%% VTF %%%%%%%%%%%%%%%%%%%%%%%%%%
%%%%%%%%%%%%%%%%%%%%%%%%%%%%%%%%%%%%%%%%%%%%%%%%%
\subsection{ACVF}
\label{ssec:AACVF}

%% Lemma 4
\begin{lemma}
%\label{lem:main-lem}
Assume $-1 \leq \alpha< 0$ and let $a$ be the symmetric positive sequence $a_n=|n|^\alpha$, $n \neq 0$ and $a_0>0$. 
Let $b$ be a symmetric sequence with $|b_0|<\infty$ for which there exists  $\beta \in [0,2]$ such that 
$\sum_{j=1}^\infty j^\beta|b_j|<\infty$ and $|b_n| = O(n^{-(\beta+1)})$. 
Then $S_b:=\sum_{j=-\infty}^\infty b_j$ and the symmetric sequence $c:=a\star b$ exist, 
and $c_n- S_ba_n=O(n^{\alpha-\beta})$ as $n \to \infty$.
\end{lemma}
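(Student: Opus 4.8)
The plan is to establish the three claims in turn, reducing the asymptotic statement to a symmetric second-difference estimate that is the $\alpha<0$ counterpart of the quantity $T_n^j=|n-j|^\alpha+(n+j)^\alpha-2n^\alpha$ handled in the $1<\alpha<2$ case (Lemma~\ref{lem:main-lem-2}).

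First I would dispose of the two existence claims, which are immediate. Since $\beta\ge0$ gives $j^\beta\ge1$ for $j\ge1$, summability of $\{j^\beta|b_j|\}$ forces $\sum_{j=-\infty}^\infty|b_j|\le|b_0|+2\sum_{j=1}^\infty j^\beta|b_j|<\infty$, so $b$ is absolutely summable and $S_b$ exists. Because $\alpha<0$ the sequence $a$ is bounded, say by $A:=\max(a_0,1)$, whence $\sum_{j}|a_{n-j}||b_j|\le A\sum_j|b_j|<\infty$; thus $c=a\star b$ exists and, both $a$ and $b$ being symmetric, so is $c$.

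For the asymptotic estimate I would first put the difference in second-difference form. Using absolute convergence together with $b_{-j}=b_j$ and $a_{-k}=a_k$,
\[
 c_n-S_ba_n=\sum_{j=-\infty}^\infty b_j\,(a_{n-j}-a_n)=\sum_{j=1}^\infty b_j\,\Delta_n^j,\qquad \Delta_n^j:=a_{n-j}+a_{n+j}-2a_n .
\]
I would then fix $n$ and split the sum at $j=\lfloor n/2\rfloor=:M$. On the bulk range $1\le j\le M$ the index $n-j\ge n/2$ stays away from the origin, so $\Delta_n^j$ is a genuine symmetric second difference of the convex function $t\mapsto t^\alpha$; two applications of the mean value theorem give $0\le\Delta_n^j=j^2\alpha(\alpha-1)\xi^{\alpha-2}\le C\,j^2 n^{\alpha-2}$ for some $\xi>n-j\ge n/2$, where $\alpha(\alpha-1)>0$. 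Combining this with the elementary bound $\sum_{j=1}^{M}j^2|b_j|=\sum_{j=1}^M j^{2-\beta}\,(j^\beta|b_j|)\le M^{2-\beta}\sum_{j=1}^\infty j^\beta|b_j|$ (valid since $2-\beta\ge0$) shows that the bulk contributes $O(n^{\alpha-2}\cdot n^{2-\beta})=O(n^{\alpha-\beta})$.

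The tail range $j>M$ is where the real work lies, and I expect it to be the main obstacle: here $n-j$ can be small, so $a_{n-j}=|n-j|^\alpha$ is no longer negligible and the second-difference bound is useless. I would bound the three pieces of $\Delta_n^j$ separately via $|\Delta_n^j|\le a_{n+j}+a_{|n-j|}+2a_n$. The contributions of $a_{n+j}\le n^\alpha$ and of $2a_n=2n^\alpha$ are each controlled by the tail sum $\sum_{j>M}|b_j|\le M^{-\beta}\sum_j j^\beta|b_j|=O(n^{-\beta})$, giving $O(n^{\alpha-\beta})$. The delicate piece is $\sum_{j>M}|b_j|\,|n-j|^\alpha$. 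Writing $j=n\pm\ell$ and invoking the hypothesis $|b_j|=O(j^{-(\beta+1)})$ (so $|b_{n\pm\ell}|=O(n^{-(\beta+1)})$ whenever $\ell\le n$), this reduces to sums of the form $n^{-(\beta+1)}\sum_{\ell=1}^{O(n)}\ell^\alpha=O(n^{-(\beta+1)}\cdot n^{1+\alpha})=O(n^{\alpha-\beta})$ for $\alpha\in(-1,0)$; the single term $j=n$ contributes $|b_n|a_0=O(n^{-(\beta+1)})=O(n^{\alpha-\beta})$ since $\alpha\ge-1$, and the far part $\ell>n$ is handled by $\ell^\alpha\le n^\alpha$ together with $\sum_{\ell>n}|b_{n+\ell}|=O(n^{-\beta})$. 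Summing the $O(n^{\alpha-\beta})$ contributions over the finitely many regions yields the claim. I would note that the only genuinely borderline case is $\alpha=-1$, where $\sum_{\ell=1}^{O(n)}\ell^{-1}$ produces an extra logarithmic factor; this endpoint plays no role in the application, where $\alpha=2H-2\in(-1,0)$ strictly (Theorem~\ref{prop:conv-of-farima}).
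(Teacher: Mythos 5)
Your proof is correct and follows essentially the same route as the paper's: the same second-difference identity $c_n - S_b a_n = \sum_{j\ge 1} b_j\bigl(a_{|n-j|}+a_{n+j}-2a_n\bigr)$, the same mean-value-theorem bound $O\bigl(j^2 n^{\alpha-2}\bigr)$ on the bulk $j\le n/2$ (the paper's Lemma~A2), and the same use of $|b_j|=O(j^{-(\beta+1)})$ for $j\approx n$ together with $\sum_{j>2n}|b_j|=O(n^{-\beta})$ far out --- only the bookkeeping of the tail differs, in that you split $\Delta_n^j$ termwise while the paper splits the index range into $(n/2,2n]$ and $(2n,\infty)$. Your caveat about the endpoint $\alpha=-1$ is apt and worth noting: the paper's own estimate for its middle range $B_n$ invokes $1+\int_1^n x^\alpha\,\mathrm{d}x$, which likewise degenerates to $\log n$ at $\alpha=-1$, so the paper silently carries the same borderline defect you flag explicitly; it is harmless because the application in Theorem~\ref{prop:conv-of-farima} only needs $\alpha=2H-2\in(-1,0)$.
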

\begin{proof}
  We have $\sum_{j=-\infty}^\infty |b_j|\leq b_0+2\sum_{j=1}^\infty
  j^\beta|b_j|<\infty$ so $b$ is absolutely summable and therefore summable. Then $S_b$ exists. 
  Moreover, $|c_n| = |(a \star b)_n| \leq \sum_{j=-\infty}^\infty |b_j| |a_{n-j}| \le |b_n|a_0 + \sum_{j=-\infty}^\infty |b_j|<\infty$. 
We conclude that $c_n$ exists for each $n \in \Z$, and that $c$ is symmetric by the symmetry of $a$ and $b$.  
Define $T_n^j:=a_{|n-j|}+a_{n+j}-2a_{n}$, and using the symmetry of $a$ and $b$ rewrite $c_n-S_ba_n$ as:
\begin{equation}
    (a \star b)_n - S_ba_n = a_n b_0 + \sum_{j=1}^\infty a_{|n-j|}b_{j} + \sum_{j=1}^\infty a_{n+j} b_{j}-  S_ba_n =\sum_{j=1}^\infty T_{n}^j b_j .
\end{equation}
To prove the last part of the theorem it suffices to consider $n\ge0$, since $c$ is symmetric, and 
as we are interested in large $n$ asymptotics, we restrict to $n>2$.
The sum for $|c_n-S_b a_n|$ can be decomposed as
\begin{equation}
  |c_n-S_b a_n| \leq \sum_{j=1}^{\lfloor{n/2}\rfloor}       |T_n^j||b_j| + \!
                \sum_{j=\lfloor{n/2}\rfloor+1}^{2n} |T_n^j||b_j| \,+
                \sum_{j=2n+1}^\infty                        |T_n^j||b_j| \,=:  A_n + B_n + C_n .
\end{equation}
We shall show that each of $A_n,B_n$, and $C_n$ are of order $O(n^{\alpha-\beta})$.

The definition of $A_n$ implies $n>j>0$, so 
%Lemma~\ref{lem:dom-fun} 
Lemma~A2 below
applies to $T_n^j=f_\alpha(n,j)$, and implies the
existence of a constant $K>0$ such that $|T_n^j| \le Kj^2(n-j)^{\alpha-2} < K j^2(n/2)^{\alpha-2}$ when $j\le n/2$. Thus
\begin{equation}
    K^{-1}2^{\alpha-2}A_n \leq n^{\alpha-2}                 \sum_{j=1}^{\lfloor{n/2}\rfloor} j^{2-\beta}j^\beta|b_j|
                                   \leq n^{\alpha-2} n^{2-\beta} \sum_{j=1}^{\lfloor{n/2}\rfloor} j^\beta|b_j| 
                                   \leq n^{\alpha-\beta} \!\sum_{j=1}^\infty j^\beta|b_j| = O(n^{\alpha-\beta}) .
\end{equation}

For $B_n$, where $n\neq j$ and $n,j>0$, we have $|T_n^j| < 2|n-j|^{\alpha}$, while $|T_n^n| =a_0+(2^\alpha-2)a_n=O(1)$. 
Then for sufficiently large $n$, by assumption there exists a $K>0$ such that
\begin{eqnarray*}
\label{eq:ajk}
  B_n   &\leq& 2\!\sum_{j=\lfloor n/2\rfloor +1}^{n-1} \!(n-j)^{\alpha}|b_j| + 2\!\sum_{j=n+1}^{2n} (j-n)^{\alpha} |b_j| \,+ |T_n^n||b_n| \\ 
           &<& 2K \bl\frac{n}{2}\br^{\!-(\beta+1)}\left(\sum_{j=\lfloor n/2\rfloor+1}^{n-1} (n-j)^{\alpha}+\sum_{j=n+1}^{2n} (j-n)^{\alpha} + |T_n^n|  /2\right) \\
           &<& 2^{\beta+3}K n^{-(\beta+1)}\sum_{j=1}^{n} j^\alpha                                                   + O(n^{-(\beta+1)})\\
           &<& 2^{\beta+3}K n^{-(\beta+1)} \Big( 1+\int_1^n x^\alpha \mathrm{d}x\Big) + O(n^{-(\beta+1)}) = O(n^{\alpha-\beta}) .
\end{eqnarray*}
For $C_n$, where $j \geq 2n$, we have $T_n^j < 2n^{\alpha}$. Since $\sum_{j=2n+1}^{\infty} |b_j| \leq
(2n)^{-\beta}\sum_{j=2n+1}^{\infty} j^\beta|b_j|=o(n^{-\beta})$, we get
\begin{equation}
   C_n \leq \sum_{j=2n+1}^\infty  2n^{\alpha}|b_j| \leq 2n^{\alpha} \!\sum_{j=2n+1}^{\infty} |b_j| = o(n^{\alpha-\beta}). 
\end{equation}
Conclude that $|c_n-S_b a_n|=O(n^{\alpha-\beta})$ as $n \to \infty$.
\end{proof}

%% Lemma 3
\smallskip
{\noindent\textbf{Lemma A2}\textit{
%\begin{lemma}
 \label{lem:dom-fun}
 Assume $\alpha < 0$ and define $f_\alpha(x,y):=|x-y|^{\alpha}+(x+y)^\alpha-2x^\alpha$ for $x>y>0$. 
 Then $f_\alpha(x,y)$ is positive and obeys $f_\alpha(x,y) <2\alpha(\alpha-1)y^2(x-y)^{\alpha-2}$.
%\end{lemma}
}
}
\vspace{-3mm}
\begin{proof}
Since $x>y$ it follows that $f_\alpha'(\,\cdot\,,y)=\alpha f_{\alpha-1}(\,\cdot\,,y)$. 
Define $g(x) = x^{\alpha}$. Since $g'(x)=\alpha x^{\alpha-1}$ is strictly concave,  
$\alpha(x-y)^{\alpha-1}+\alpha(x+y)^{\alpha-1} < 2 \alpha x^{\alpha-1}$ and so
$f_\alpha'(\,\cdot\,,y) <0$ and $f_\alpha(\,\cdot\,,y)$ is strictly decreasing.  
Now apply the mean value theorem twice to $g$, and then once to $g'$, to obtain:
  \begin{eqnarray*}
    f_\alpha(x,y)   &=& \big((x+y)^\alpha-x^\alpha\big)-\big(x^\alpha-(x-y)^\alpha\big)\\
                             &<& \alpha y((x-y)^{\alpha-1}-(x+y)^{\alpha-1}) \\
                             &<& 2\alpha(\alpha-1)y^2(x-y)^{\alpha-2}.
  \end{eqnarray*}
since $g'$ is strictly increasing and $g''$ strictly decreasing.
\end{proof}

\noindent\textbf{Details of the proof of Corollary~\ref{cor:spectral-closeness}} \
We explain here why $\varphi(x)=f_H(x)-f_H^*(x)=O(x^{-2H+3})$ as $x\to 0$. 
Calculate the first few derivatives of the analytic function
$x \mapsto (\sin(x)/x)^{-2H+1}$ (set to 1 at $x=0$) and expand in a Taylor series around
the origin to find that $ (\sin(x)/x)^{-2H+1}=1+O(x^2)$. 
It follows that $\sin(x)^{-2H+1}=x^{-2H+1}+O(x^{-2H+3})$ for $x \neq 0$.  
The function $h$ is assumed three times (continuously) differentiable. 
Symmetry implies $h'(0)=0$ so that by Taylors theorem, $h(x)=h(0)+O(x^2)$. Thus
\begin{displaymath}
   h(x)\sin(\pi x)^{-2H+1}=h(0)\pi^{-2H+1} x^{-2H+1}+O(x^{-2H+3}),
   \quad x \neq 0,
\end{displaymath}
while it can be shown that
 \begin{displaymath}
 (1-\cos(2\pi x))x^{-2H-1}=2\pi^2x^{-2H+1}+O(x^{-2H+3}), \quad x \neq 0,
 \end{displaymath}
and
\begin{displaymath}
  (1-\cos(2\pi x))\sum_{\substack{j=-\infty \\ j \neq 0}}^\infty|\pi
  j+\pi x|^{-2H-1}=O(x^2).
\end{displaymath}
Then
\begin{eqnarray*}
  f_H(x)-f_{H}^*(x) 
&=&\big\{h(x)(2\sin(\pi x))^{-2H+1}\big\}\\
&& \phantom{XXX}-\big\{h(0)2^{-2H}\pi^{-2H-1}(1-\cos
(2\pi x))
  x^{-2H-1}\big\}+O(x^2) \\
  &=&\big\{h(0)\pi^{-2H+1} 2^{-2H+1}x^{-2H+1}+O(x^{-2H+3})\big\}\\
  && \phantom{XXX}-\big\{h(0)2^{-2H+1}\pi^{-2H+1}x^{-2H+1}+O(x^{-2H+3})\big\}+O(x^2)\\
  &=&O(x^{-2H+3}).
\end{eqnarray*}

%%%%%%%%%%%%%%%%%%%%%%%%%%%%%%%%%%%%%%%%%%%%%%%%
%\bibliographystyle{plain}
\bibliographystyle{Chicago}
%\bibliography{darryl,darryls_publications}

%%%%%%%%%%%%%%%%%%%%%%%%%%%%%%%%%%%%%%%%%%%%%%%%
\end{document}